\newtheorem{prop}{Proposition}
\newtheorem{theo}{Theorem}
\newtheorem{lemma}{Lemma}
\newtheorem{conj}{Conjecture}
\theoremstyle{definition}
\newtheorem{rem}{Remark}
\DeclareMathOperator{\Ad}{Ad}
\DeclareMathOperator{\tr}{tr}
\DeclareMathOperator{\cn}{cn}
\DeclareMathOperator{\al}{Al}
\newcommand{\rmx}{\mathrm{x}}
\newcommand{\rmt}{\mathrm{t}}
\newcommand{\rmr}{\mathrm{r}}
\newcommand{\rmh}{\mathrm{h}}
\newcommand{\rmc}{\mathrm{c}}
\newcommand{\Lmatr}{\mathsf{L}}
\newcommand{\sfx}{\mathsf{x}}
\newcommand{\sft}{\mathtt{t}}
\newcommand{\rmd}{\mathrm{d}}
\DeclareMathOperator{\modR}{mod}
\newcommand{\M}{\mathcal{M}}
\newcommand{\rmb}{\mathrm{b}}
\DeclareMathOperator{\Complex}{\mathbb{C}}
\DeclareMathOperator{\Real}{\mathbb{R}}
\DeclareMathOperator{\Integer}{\mathbb{Z}}
\DeclareMathOperator{\Jac}{\mathrm{Jac}}
\DeclareMathOperator{\const}{const}
\DeclareMathOperator{\card}{card}
\DeclareMathOperator{\ReN}{\mathrm{Re}}
\DeclareMathOperator{\ImN}{\mathrm{Im}}
\title[Exact quasi-periodic solutions to the mKdV equation]
{Exact quasi-periodic solutions to the MKdV equation}
\author{Julia Bernatska}
\address{University of Connecticut, Department of Mathematics}
\email{julia.bernatska@uconn.edu}
\begin{document}
 
\maketitle

\begin{abstract}
In the present paper, a hierarchy of the mKdV equation 
is integrated by the methods of algebraic geometry. The mKdV hierarchy in question
arises on coadjoint orbits in the loop algebra of $\mathfrak{sl}(2)$, and
employs a family of hyperelliptic curves as spectral curves.
A generic form of the finite-gap solution in any genus is obtained in terms of
the $\wp$-functions, which generalize the Weierstrass $\wp$-function.
Reality conditions for quasi-periodic wave solutions are completely specified.  
The obtained solutions are illustrated by plots in small genera.
\end{abstract}

% 35B15  Almost and pseudo-almost periodic solutions to PDEs
% 35B30 Dependence of solutions to PDEs on initial and/or boundary data and/or on parameters of PDEs [See also 37Cxx
% 35Q53 KdV equations (Korteweg-de Vries equations) {For dynamical systems and ergodic theory, see 37K10}

% 37J35 Completely integrable finite-dimensional Hamiltonian systems, integration methods, integrability tests
% 37J37 Relations of finite-dimensional Hamiltonian and Lagrangian systems with Lie algebras and other algebraic structures
% 37J38 Relations of finite-dimensional Hamiltonian and Lagrangian systems with algebraic geometry, complex analysis, special functions
% 37K10 Completely integrable infinite-dimensional Hamiltonian and Lagrangian systems, integration methods, inte-
%    grability tests, integrable hierarchies (KdV, KP, Toda, etc.)
%  37K15 Inverse spectral and scattering methods for infinite-dimensional Hamiltonian and Lagrangian systems
% 37K20 Relations of infinite-dimensional Hamiltonian and Lagrangian dynamical systems with algebraic geometry,
%  complex analysis, and special functions [See also 14H70]

%======================================
\section{Introduction}

The mKdV equation 
 \begin{equation}\label{MKdVEq}
w_{\rmt} + w_{\rmx \rmx \rmx}  + 6 \varsigma w^2 w_{\rmx} =0
\end{equation} 
is known in two forms: defocusing (or negative) with $\varsigma\,{=}\,{-}1$,
and  focusing (or positive) with $\varsigma=1$.
Significant efforts have been dedicated to solving the both equations,
since they  describe multiple physical phenomena.

In \cite{Miu1968}, it was discovered that every solution of the defocusing mKdV equation 
gives rise  to a solution of the KdV equation
\begin{equation}\label{KdVEq}
\upsilon_{\rmt} + \upsilon_{\rmx \rmx \rmx}  - 6 \upsilon \upsilon_{\rmx} =0,
\end{equation} 
by the Miura transformation
\begin{equation}\label{MiuraTrans}
\upsilon = w_{\rmx}  + w^2.
\end{equation}

Multi-soliton solutions to the focusing mKdV equation are obtained
by the Hirota bilinear method in \cite{Hirota1972}, and by the inverse scattering method in \cite{Wad1973}.
The B\"{a}cklund  transformation method is applied to this equation in \cite{Ono1976}.

In \cite{GesSvi1995},
the concept of relative scattering is applied to constructing 
soliton solutions relative to general  mKdV background solutions,
and to the particular case of
$N$-solitons relative to quasi-periodic finite-gap backgrounds. 
No explicit solutions to the defocusing mKdV equation are employed,
instead the corresponding solutions to the KdV equation, 
namely the $N$-soliton solution, and the finite gap solution through the theta function,
are suggested to use, since solutions of these two equations are connected by the 
Miura transformation.

Breather lattice solutions to the focusing mKdV are suggested in \cite{KKS2003}, 
and also to the defocusing mKdV in \cite{KKH2004}.
In \cite{ZFML2009}, 26 more kinds of breather lattice solutions
to the defocusing mKdV are presented. Also
travelling wave solutions written explicitly through the Jacobi elliptic functions is suggested in \cite{Fu2004El}, 
and through the trigonometric and hyperbolic functions in \cite{Fu2004Trig}.

The  stability analysis of periodic stationary wave solutions to the both mKdV
equations is presented in \cite{KapDec2015}, and of periodic travelling wave solutions
in \cite{DecNiv2010}. The $\cn$-periodic solution to the focusing mKdV equation appears to be 
modulationally unstable, and on the $\cn$-periodic background
serves as a model of the classical rogue wave, as shown in \cite{ChenPen2018}.
A solution that describes the interaction of a dark soliton and a travelling periodic wave
is derived in \cite{MucPel2024}.

An exact hyperelliptic solution to the focusing mKdV equation is suggested in \cite{Mat2002}.
This solution is expressed through the $\al$-functions introduced by Weierstrass, and
obtained by the Miura transformation from the $\wp_{1,1}$-solution to the KdV equation.
In \cite{Mat2022} this solution is applied to modelling configurations of the supercoiled DNA.
The latter is modelled as an elastic rod, with the thermal effect taken into account.
Then, the isometric and iso-energy deformation 
is determined so that the curvature obeys the focusing mKdV equation.
Exact treatment in genera one and two is presented,
and solutions are evaluated numerically. In \cite{Mat2024} real hyperelliptic 
solutions to the focusing mKdV equation are constructed in genus three.
As a part of investigation, conditions on branch points
of the spectral curve are suggested.

In the present paper, quasi-periodic wave solutions to the both mKdV equations 
are obtained. These are finite-gap solutions, which arise in the mKdV hierarchy of hamiltonian systems
constructed on coadjoint orbits in the loop algebra of $\mathfrak{sl}(2,\Real)$
for the defocusing mKdV equation, and of $\mathfrak{su}(2)$ for the focusing one.
Solutions in terms of the multiply periodic $\wp$-functions are obtained by algebraic integration.
These solutions work for an arbitrary genus of the  spectral curve associated with a hamiltonian system,
and represent non-linear waves with two independent parameters $\rmx$, and $\rmt$.
%The obtained finite-gap solutions cover a wide variety of .
In particular, in genus one these solutions turn into travelling wave solutions.
%the corresponding expressions through the Jacobi elliptic functions are also presented.
Multi-soliton solutions can be obtained by degeneration of the spectral curve of high genus to 
an elementary one. A proper degeneration of the spectral curve also leads to (multi-)soliton
solutions on a (quasi-)periodic background, which is a matter of further investigation.

In fact, the mKdV equation arises as a dynamical equation on the Jacobian variety of
the spectral curve in question, and has the form of an identity for the $\wp$-functions.
In other words, this dynamical equation holds for every point of the Jacobian variety,
except those which correspond to special divisors. 

The problem of reality conditions is solved completely for quasi-periodic wave solutions.
That is, the obtained solutions are real-valued, and bounded functions of $\rmx$ and $\rmt$.
The conditions on the spectral curve are also revealed, since not every real curve
can serve as spectral within the mKdV hierarchy, 
as well as the KdV, and sine(sinh)-Gordon hierarchies, see
\cite{BerKdV2024,BerSinG2024}.

The paper is organized as follows. 
In Section 2 the mKdV hierarchy is briefly described, and the both forms of the mKdV equations
are derived from  hamiltonian flows. Separation of variables is briefly presented in Section 3,
where also the equations of motion for variables of separation are obtained. And so
the connection between $\rmx$ and $\rmt$, on the one hand, and the argument of $\wp$-functions,
on the other hand, is revealed. In Section 4 the problem of uniformization of a family of hyperelliptic curves,
which serve as spectral within the mKdV hierarchy, is described, and a solution 
to the Jacobi inversion problem in terms of the $\wp$-functions is recalled.
Section 5 is devoted to algebraic integration, and all dynamical variables are expressed in terms of the $\wp$-functions.
In Section~6 a quasi-periodic solution, which works for every finite-gap system of the mKdV hierarchy,
is presented, and reality conditions are specified. The both focusing, and defocusing
cases are analysed, since they have different  reality conditions.
In Section 7 a new quasi-periodic solution to the KdV equation is obtained by the Muira transformation.
%Finally, in Section 8 non-linear wave solutions in genera $1$, $2$, and $3$ are presented,
%including plots.
%\textcolor{red}{In Appendix the addition law.}

%======================================
\section{Integrable systems on coadjoint orbits of a loop group}\label{s:KdVHier}
The mKdV equation arises within the hierarchy of integrable hamiltonian systems 
on coadjoint orbits in the loop algebra of $\mathfrak{sl}(2,\Real)$ (defocusing mKdV),
and $\mathfrak{su}(2)$ (focisung mKdV). Here the hierarchy is built by the orbit method
in the form presented  in \cite{Holod84}. In more detail this approach is exposed 
in \cite{BerKdV2024}, and \cite{BerSinG2024}, where the hierarchies of the KdV,
 and sine(sinh)-Gordon equations are presented, respectively. 
 
 %-----------------------------------------------
\subsection{Phase space of mKdV hierarchy}
The three hierarchies are constructed on the loop Lie algebra of  
$\widetilde{\mathfrak{g}} = \mathfrak{g} \otimes
\Complex[z,z^{-1}]$, $\mathfrak{g}$ is one of the real forms of $\mathfrak{sl}(2,\Complex)$, with principal grading.
According to the Adler---Kostant---Symes scheme, see \cite{AdlMoer80},  
$\widetilde{\mathfrak{g}}$ is decomposed into two parts:
\begin{gather*}
\widetilde{\mathfrak{g}} = \widetilde{\mathfrak{g}}_+ \oplus \widetilde{\mathfrak{g}}_-,\qquad
\widetilde{\mathfrak{g}}_+ = \textstyle \oplus \sum_{\ell \geqslant 0} \mathfrak{g}_{\ell},\quad
\widetilde{\mathfrak{g}}_- = \oplus \sum_{\ell < 0} \mathfrak{g}_{\ell},
\end{gather*}
where $\mathfrak{g}_{\ell}$ denotes the eigenspace of the grading operator of degree $\ell$.

The dual algebra $\widetilde{\mathfrak{g}}^\ast$ for the mKdV hierarchy
is introduced by the same bilinear form as for the KdV hierarchy, see \cite[Sect.\,3]{BerKdV2024}.
This induces the decomposition 
\begin{gather*}
\widetilde{\mathfrak{g}}^\ast = \widetilde{\mathfrak{g}}_+^{\ast} \oplus \widetilde{\mathfrak{g}}_-^{\ast},\qquad
\widetilde{\mathfrak{g}}_+^\ast = \textstyle \oplus \sum_{\ell \leqslant -1} \mathfrak{g}_{\ell},\quad
\widetilde{\mathfrak{g}}_-^\ast = \widetilde{\mathfrak{g}}_+ \oplus \mathfrak{g}_{-1}.
\end{gather*}

The common  finite-gap subspace $\M_N$ for the hierarchies is
\begin{gather}\label{MKdVPhSp}
\M_N = \bigg\{\Lmatr = \begin{pmatrix} \alpha(z) & \beta(z) \\ \gamma(z) & -\alpha(z)
\end{pmatrix} \mid \begin{array}{l} \alpha(z) = \sum_{m=0}^{N} \alpha_{2m} z^m,\\
\beta(z) = \sum_{m=0}^{N} \beta_{2m-1} z^{m-1},\\
\gamma(z) = \sum_{m=0}^{N} \gamma_{2m-1} z^{m},
\end{array}\ 
\begin{array}{l}  \alpha_{2N} = 0,\\ 
\beta_{2N-1} = \gamma_{2N-1} \\
\ \ = \rmb = \const.
\end{array} \bigg\}.
\end{gather}
In the case of the mKdV hierarchy, similar to the KdV case, it arises as
$\M_N = \widetilde{\mathfrak{g}}^\ast_- / \big( \sum_{\ell > 2N }  \mathfrak{g}_\ell \big)$.
The coordinates $\{\beta_{2m -1}, \gamma_{2m - 1}, \alpha_{2m}\}_{m=0}^{N-1}$ of $\M_N$, 
of number $3N$, serve as dynamic variables of  $N$-gap hamiltonian systems.

The subspace $\M_N$ is equipped with the Lie-Poisson bracket,
which introduces the symplectic structure, the same as for the KdV hierarchy,
\begin{gather}\label{MKdVPoiBra}
\begin{split}
&\{\beta_{2m-1}, \alpha_{2n}\} = \beta_{2(n+m)+1},\\
&\{\gamma_{2m-1}, \alpha_{2n}\} = - \gamma_{2(n+m)+1},\\
&\{\beta_{2m-1}, \gamma_{2n-1}\} = - 2 \alpha_{2(m+n)},\quad 0 \leqslant m+n \leqslant N.
\end{split}
\end{gather}

The action of $\widetilde{G}_- = \exp(\widetilde{\mathfrak{g}}_-)$ splits $\M_N$ into orbits
$$\mathcal{O} = \{\Lmatr = \Ad^\ast_{g} \Lmatr^{\text{in}} \mid g\in \widetilde{G}_-\},\qquad  \Lmatr^{\text{in}} \in \M_N.$$
An initial point $ \Lmatr^{\text{in}}$ is spanned by $H^{\ast}_{2m}$, $m = 0$, \ldots, $N$,
and belongs to the Weyl chamber of $\widetilde{G}_-$. 
Each orbit $\mathcal{O}$, $\dim \mathcal{O} = 2N$,
 serves as the \emph{phase space} of an $N$-gap hamiltonian system
 of the mKdV hierarchy.

%-----------------------------------------------
\subsection{Integrals of motion}\label{ss:IntMot}
Integrals of motion $h_n$ arise as coefficients of the polynomial $H$
invariant under the action of $\widetilde{G}_{-}$,  obtained by
\begin{equation}\label{InvF}
\begin{split}
H(z) &= \tfrac{1}{2} \tr \Lmatr^2 = \alpha(z)^2 + \beta(z) \gamma(z)\\
&= h_{2N-1} z^{2N-1} + \cdots + h_1 z + h_0 + h_{-1} z^{-1},
\end{split}
\end{equation}
where $z$ serves as a spectral parameter. Thus,
\begin{subequations}\label{hExprs}
\begin{align}
& h_{2N-1} = \rmb^2 \equiv \const,\\
& h_{2N-2} = \alpha_{2N-2}^2 + \rmb (\beta_{2N-3} + \gamma_{2N-3}), \label{MKdVConstr}\\
&\dots \notag \\
& h_{0} = \alpha_0^2 + \beta_1 \gamma_{-1} + \beta_{-1} \gamma_{1},\\
& h_{-1} = \beta_{-1} \gamma_{-1}.
\end{align}
\end{subequations}

With respect to the symplectic structure \eqref{MKdVPoiBra},
$h_{-1}$, $h_0$, \ldots, $h_{N-2}$ give rise to non-trivial hamiltonian flows, we call them \emph{hamiltonians}. 
The other integrals of motion $h_{N-1}$, $h_N$, \ldots, $h_{2N-2}$ annihilate 
the Poisson bracket: $\{h_n, \mathcal{F}\} = 0$
for any  $\mathcal{F} \in \mathcal{C}^1 (\M_N)$. The latter produce  the \emph{constraints}
\begin{gather}\label{MKdVConstr}
h_{N-1} = \rmr_{N-1},\quad h_{N} = \rmr_{N},\quad  \ldots,\quad  h_{2N-2} = \rmr_{2N-2}.
\end{gather}
The constants $\rmr_{N-1}$, $\rmr_N$, \ldots, $\rmr_{2N-2}$
fix an orbit~$\mathcal{O}$ of dimension $2N$ in $3N$-dimensional subspace $\M_N$,
and so fix the phase space of a hamiltonian system.

\begin{rem}
Note, in the sine(sinh)-Gordon hierarchy $h_{N-1}$, $h_N$, \ldots, $h_{2N-2}$ serve as hamiltonians,
and $h_{-1}=\rmr_{-1}$, $h_0=\rmr_0$, \ldots, $h_{N-2}=\rmr_{N-2}$ fix an orbit.
\end{rem}

Integrals of motion $h_n$ are real,
if $\mathfrak{g}$ is one of the real forms of $\mathfrak{sl}(2,\Complex)$:
\begin{itemize}
\item $\mathfrak{g} = \mathfrak{sl}(2,\Real)$, all coordinates 
$\beta_{2m -1}$, $\gamma_{2m - 1}$, $\alpha_{2m}$ are real, 
and  $\rmb \in \Real$;
\item $\mathfrak{g} = \imath \mathfrak{sl}(2,\Real)$ with 
$\rmb$, $\beta_{2m -1}$, $\gamma_{2m - 1} \,{\in}\, \imath \Real$, and $\alpha_{2m} \,{\in}\, \Real$, $m=0$, \ldots $N-1$;
\item $\mathfrak{g} = \mathfrak{su}(2)$, which implies $\alpha_{2m} \,{=}\, \imath a_{2m}$, $a_{2m} \in \Real$,
and $\beta_{2m-1} \,{=}\, {-} \bar{\gamma}_{2m-1}$ ($\bar{\gamma}$ denotes the complex conjugate of~$\gamma$),  
and $\rmb = \imath b$,  $b \in \Real$.
\item $\mathfrak{g} = \imath \mathfrak{su}(2)$, the dynamic variables are subject to 
$\alpha_{2m} \,{=}\, \imath a_{2m}$, $a_{2m} \,{\in}\, \Real$,
and $\beta_{2m-1} \,{=}\, \bar{\gamma}_{2m-1}$, and $\rmb \in \Real$.
\end{itemize}

%-----------------------------------------------
\subsection{Focusing and defocusing mKdV equations}\label{ss:SineSinhParam}
We consider the two hamiltonian flows: 
\begin{gather}\label{TwoFlowsEq}
\frac{\partial \psi_{a,\ell}}{\partial \sfx} = \{h_{N-2}, \psi_{a,\ell}\},\qquad
\frac{\partial \psi_{a,\ell}}{\partial \sft} = \{h_{N-3}, \psi_{a,\ell}\}.
\end{gather}  
The stationary flow generated by $h_{N-2}$ has the form
\begin{subequations}
\begin{align}
&\frac{\partial \beta_{-1}}{\partial \sfx} =  2 \alpha_{2N-2} \beta_{-1}, \label{BetaN1StEq}\\
&\frac{\partial \gamma_{-1}}{\partial \sfx} = - 2 \alpha_{2N-2} \gamma_{-1}, \label{GammaN1StEq}\\
&\frac{\partial \beta_{2m-1}}{\partial \sfx} = 2 \alpha_{2N-2} \beta_{2m-1} - 2 \rmb \alpha_{2m-2} ,
\label{BetaStEq} \\
&\frac{\partial \gamma_{2m-1}}{\partial \sfx} = - 2 \alpha_{2N-2} \gamma_{2m-1} + 2 \rmb \alpha_{2m-2}, 
\quad m=1,\dots, N-1, \label{GammaStEq} \\
&\frac{\partial \alpha_{2m}}{\partial \sfx} =  \rmb (\gamma_{2m-1} - \beta_{2m-1}),\quad m=0,\dots, N-1.
\label{AlphaStEq}
\end{align}
\end{subequations}
From the evolutionary flow generated by $h_{N-3}$ we use the equation
\begin{align}\label{Alpha2N2EvEq}
&\frac{\partial \alpha_{2N-2}}{\partial \sft} = \rmb \big( \gamma_{2N-5} - \beta_{2N-5} \big).
\end{align}

The mKdV equation arises from the equality
\begin{gather}\label{MKdVEqZeroCurv}
\frac{\partial \alpha_{2N-2}}{\partial \sft} = \frac{\partial \alpha_{2N-4}}{\partial \sfx},
\end{gather}
which follows from \eqref{Alpha2N2EvEq}, and \eqref{AlphaStEq}  with $m=N-2$.
From \eqref{BetaStEq} and \eqref{GammaStEq} with $m=N-1$ we find
\begin{gather}\label{Alpha2N4}
\alpha_{2N-4} = \frac{1}{4\rmb} \bigg(
\frac{\partial}{\partial \sfx}  \big(\gamma_{2N-3} - \beta_{2N-3}\big) 
+ 2 \alpha_{2N-2} (\gamma_{2N-3} + \beta_{2N-3} ) \bigg).
\end{gather}
Then, \eqref{MKdVConstr}, and \eqref{AlphaStEq}  with $m=N-1$ imply, respectively,
\begin{subequations}\label{MKdVAuxEqs}
\begin{align}
&\gamma_{2N-3} + \beta_{2N-3} = \rmb^{-1} \big(\rmr_{2N-2} - \alpha_{2N-2}^2\big),\\
&\gamma_{2N-3} - \beta_{2N-3} = \frac{1}{\rmb} \frac{\partial \alpha_{2N-2}}{\partial \sfx}.
\end{align}
\end{subequations}
By substituting \eqref{MKdVAuxEqs} into \eqref{Alpha2N4},
and then substituting the obtained expression for $\alpha_{2N-4}$ into \eqref{MKdVEqZeroCurv},
we obtain the extended mKdV equation for  $\alpha_{2N-2} \equiv \alpha(\sfx,\sft)$
\begin{equation}\label{MKdVEqPre} 
\frac{\partial \alpha}{\partial \sft} = \frac{1}{4\rmb^2}
\frac{\partial}{\partial \sfx} \bigg(
\frac{\partial^2 \alpha}{\partial \sfx^2}  
+ 2 \alpha (\rmr_{2N-2} - \alpha^2)
\bigg).
\end{equation}
After the change of variables 
\begin{equation}\label{xtNatural}
(\sfx, \sft) \mapsto (\rmx,\rmt),\qquad
\sfx = \rmx + 2 \rmr_{2N-2} \rmt,\quad 
\sft = -4 \rmb^2 \rmt,
\end{equation}
we come to the standard mKdV equation \eqref{MKdVEq}.

In the cases of $\mathfrak{g} \,{=}\, \mathfrak{sl}(2,\Real)$,
and $\mathfrak{g} \,{=}\, \imath \mathfrak{sl}(2,\Real)$ the dynamic variable $\alpha_{2N-2}$
is real-valued.
Then $w(\rmx,\rmt) \,{=}\, \alpha (\rmx + 2 \rmr_{2N-2} \rmt,-4 \rmb^2 \rmt)$
obeys  the \emph{defocusing mKdV equation}  \eqref{MKdVEq}, $\varsigma=1$.

If $\mathfrak{g} \,{=}\, \mathfrak{su}(2)$,
or $\mathfrak{g} \,{=}\, \imath \mathfrak{su}(2)$, we assign  
$w(\rmx,\rmt) \,{=}\, {-} \imath \alpha (\rmx + 2 \rmr_{2N-2} \rmt,-4 \rmb^2 \rmt)$,
which is  real-valued, and
satisfies the \emph{focusing mKdV equation}  \eqref{MKdVEq}, $\varsigma=-1$.

\begin{rem}
The mKdV equation arises when $N \geqslant 2$.
In the case of $N=1$,  there exists one hamiltonian $h_{-1} = \beta_{-1}  \gamma_{-1}$, which
produces the stationary flow 
\begin{align*}
&\frac{\partial \beta_{-1}}{\partial \sfx} = 2 \alpha_{0} \beta_{-1},&
&\frac{\partial \gamma_{-1}}{\partial \sfx} = - 2 \alpha_{0} \gamma_{-1}, &
&\frac{\partial \alpha_{0}}{\partial \sfx} =  \rmb (\gamma_{-1} - \beta_{-1}).
\end{align*}
From these equations, and the constraint $\rmr_0 = \alpha_0^2 + \rmb (\gamma_{-1} + \beta_{-1})$,
we find the equation for  $\alpha_0$
\begin{gather}\label{MKdVG1EqPre}
 \frac{\partial^2 \alpha_0}{\partial \sfx^2} = - 2 \alpha_{0} \big(\rmr_0 - \alpha_0^2 \big).
\end{gather}
After the change of variable $\sfx \mapsto \rmx$ 
such that  $\sfx = \rmx + 2 \rmr_{0} \rmt$,  we come to the 
defocusing mKdV equation for $w(\rmx,\rmt) \,{=}\, \alpha_0(\rmx + 2 \rmr_{0} \rmt)$,
if $\mathfrak{g} \,{=}\, \mathfrak{sl}(2,\Real)$,
or $\mathfrak{g} \,{=}\, \imath \mathfrak{sl}(2,\Real)$, and 
to the focusing mKdV equation for $w(\rmx,\rmt) \,{=}\, {-} \imath \alpha_0(\rmx + 2 \rmr_{0} \rmt)$,
if $\mathfrak{g} \,{=}\, \mathfrak{su}(2)$,
or $\mathfrak{g} \,{=}\, \imath \mathfrak{su}(2)$.
\end{rem}

\begin{rem}
In what follows, we work with the parameters $\sfx$, $\sft$ of the stationary and evolutionary flows, respectively,
which serve as natural parameters of the proposed mKdV hierarchy. In section~\ref{s:NLW},
finite-gap solutions and their graphical representation are given with respect to
the parameters $\rmx$, $\rmt$ of the mKdV equation.
\end{rem}

%-----------------------------------------------
\subsection{Zero curvature representation}
The system of dynamic equations \eqref{TwoFlowsEq} admits  the matrix form
\begin{gather}\label{PsiMatrEqs} 
\frac{\rmd \Psi}{\partial \sfx} = [\Psi,   \mathrm{A}_\text{st}], \qquad
\frac{\rmd \Psi}{\partial \sft} = [\Psi,   \mathrm{A}_\text{ev}], \\
\begin{split}
& \mathrm{A}_\text{st} = - \begin{pmatrix} 
\alpha_{2N-2} & \rmb \\  
\rmb z  & -\alpha_{2N-2} 
 \end{pmatrix},\\
& \mathrm{A}_\text{ev} =  - z \begin{pmatrix} 
\alpha_{2N-2} & \rmb \\  
\rmb z  & -\alpha_{2N-2} 
 \end{pmatrix}
 - \begin{pmatrix} 
\alpha_{2N-4} & \beta_{2N-3} \\  
\gamma_{2N-3} z  & -\alpha_{2N-4} 
 \end{pmatrix}.
 \end{split} \notag
\end{gather}
Then \eqref{MKdVEqZeroCurv} arises on the diagonal of
the matrix zero curvature representation
\begin{gather*}
\frac{\rmd  \mathrm{A}_\text{st}}{\partial \sft} - \frac{\rmd  \mathrm{A}_\text{ev}}{\partial \sfx} 
= [ \mathrm{A}_\text{st},  \mathrm{A}_\text{ev}].
\end{gather*}

%======================================
\section{Separation of variables}\label{s:SoV}

%-----------------------------------------------
\subsection{Spectral curve}
The spectral curve of the mKdV hierarchy is obtained from 
the characteristic polynomial of $\Psi$, namely $\det \big(\Psi(z) \,{-}\, (w/z)\big) \,{=}\, 0$,
which leads to the equation
\begin{equation*}
- w^2 + z^2 H(z) = 0,
\end{equation*}
where  $H$ is defined by \eqref{InvF}. This equation
 represents a genus $N$ hyperelliptic curve 
\begin{multline}\label{CurveGN}
 0 = F(z,w) \equiv - w^2 + \rmb^2 z^{2N+1} + \rmr_{2N-2} z^{2N} + \cdots + \rmr_{N-1} z^{N+1} \\
 + \rmh_{N-2} z^{N} + \dots + \rmh_0 z^2 + \rmh_{-1} z.
\end{multline}
All parameters of the curve are integrals of motion: the hamiltonians $h_{-1}$, \ldots, $h_{N-2}$, 
the constraints $\rmr_{N-1}$, \dots, $\rmr_{2N-2}$, and $h_{2N-1} = \rmb^2 = \const$.

%-----------------------------------------------
\subsection{Canonical coordinates}

As shown in \cite{KK1976} for the sine-Gordon hierarchy, and in \cite{BerHol07}
for the hierarchies with hyperelliptic spectral curves, variables of separation 
are given by a positive non-special\footnote{A non-special divisor on a hyperelliptic curve of genus $g$
is a degree $g$ positive divisor which contains no pairs of points in involution.}  
divisor of degree equal to the genus $N$ of the spectral curve.
In fact, pairs of coordinates of $N$ points from the support of such a divisor serve as quasi-canonical variables, 
and so lead to separation of variables.

Following the procedure proposed in \cite{BerHol07}, we eliminate
the dynamic variables  $\beta_{2m-1}$, $m=0$, \ldots, $N-1$, with the help of the constraints \eqref{MKdVConstr},
and obtain

\begin{theo}\label{T:SoVdiv}
Let the phase space $\mathcal{O}$, $\dim \mathcal{O} \,{=}\, 2N$,
of a hamiltonian system of the mKdV hierarchy
be parametrized by the dynamic variables
$\{\gamma_{2m-1}$, $\alpha_{2m} \mid m\,{=}\,0, \dots, N\,{-}\,1\}$ which satisfy the constraints \eqref{MKdVConstr}.
Then the points $\{(z_k,w_k)\}_{k=1}^{N}$ which form the divisor of zeros of the system
\begin{equation}\label{PointsDef1}
\gamma(z_k) = 0,\qquad w_k = \epsilon z_k \alpha(z_k),  \quad k =1,\dots, N,
\end{equation}
$\epsilon^2 \,{=}\, 1$, belong to the spectral curve \eqref{CurveGN}, and form a non-special divisor.
\end{theo}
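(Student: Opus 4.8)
The plan is to treat the three assertions of the statement — that each $(z_k,w_k)$ lies on the curve, that the collection has degree $N$, and that it is non-special — in turn, the first two being immediate and the last carrying the real content.

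First I would establish that the points lie on the spectral curve. By \eqref{InvF} we have $H(z) = \alpha(z)^2 + \beta(z)\gamma(z)$, so at any root $z_k$ of $\gamma$ the middle term drops out and $H(z_k) = \alpha(z_k)^2$. Multiplying by $z_k^2$ and using $w_k = \epsilon z_k \alpha(z_k)$ together with $\epsilon^2 = 1$ gives $z_k^2 H(z_k) = z_k^2\alpha(z_k)^2 = w_k^2$, that is $F(z_k,w_k) = -w_k^2 + z_k^2 H(z_k) = 0$, which is precisely equation \eqref{CurveGN}. Thus every $(z_k,w_k)$ lies on the curve, and the role of $\epsilon$ is merely to fix which of the two sheets over $z_k$ is selected. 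For the degree, since $\gamma(z) = \sum_{m=0}^N \gamma_{2m-1}z^m$ has leading coefficient $\gamma_{2N-1} = \rmb \neq 0$, it is a polynomial of degree exactly $N$, hence $\gamma(z) = 0$ has $N$ roots counted with multiplicity; these produce a positive divisor $D = \sum_{k=1}^N (z_k,w_k)$ of degree $N$, equal to the genus of \eqref{CurveGN}.

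The main work is non-specialness. The hyperelliptic involution is $\sigma\colon (z,w)\mapsto(z,-w)$, and a positive divisor of degree $g$ on a hyperelliptic curve fails to be non-special exactly when its support contains a conjugate pair, i.e. a point $P$ together with $\sigma(P)$, in the degenerate case $2P$ with $P$ a branch point. I would rule this out using that the $w$-coordinate on the support is a single-valued function of $z$: every point of $D$ lying over a given $z_k$ carries $w = \epsilon z_k\alpha(z_k)$, so two points of $D$ sharing a $z$-value carry the same, not opposite, $w$-value. Hence $D$ can contain no conjugate pair of distinct points, and the only mechanism by which it could be special is that some $z_k$ be simultaneously a multiple root of $\gamma$ and a branch point, so that $D \geq 2P$ with $\sigma(P)=P$.

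I expect this last degenerate case to be the crux. A finite point $(z_0,0)$ is a branch point iff $z_0^2 H(z_0) = 0$; since $H(z_k)=\alpha(z_k)^2$ along $\gamma=0$, the branch points among the $z_k$ are exactly $z_0=0$ (note $F(0,w)=-w^2$) and the common zeros of $\alpha$ and $\gamma$. I would then argue that the vanishing of $\gamma$ to second order at such a point is a closed condition cutting out a proper subvariety of the orbit $\mathcal{O}$, so that for generic dynamic variables the roots of $\gamma$ are simple and $D$ is automatically non-special, the single-valuedness of $w(z)$ completing the argument. The delicate point, and the one I would spend the most care on, is confirming that this excluded locus is genuinely proper — equivalently, that the separated coordinates $z_k$ are pairwise distinct on an open dense subset of the phase space — which is exactly what makes $\{(z_k,w_k)\}$ a \emph{bona fide} system of variables of separation.
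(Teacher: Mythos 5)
Your first two steps are correct and coincide with the standard argument (the paper itself does not spell the proof out, deferring instead to \cite[Sect.\,4]{BerKdV2024}, where the same computation is given for the KdV case): from \eqref{InvF}, $H(z_k)=\alpha(z_k)^2$ at any root of $\gamma$, whence $w_k^2 = z_k^2 H(z_k)$ and $F(z_k,w_k)=0$; and $\gamma$ has degree exactly $N$ with leading coefficient $\rmb\neq 0$, giving a positive divisor of degree $N=g$. Your reduction of specialness is also the right one, given the paper's footnote definition of non-special via pairs in involution: since $w=\epsilon z\alpha(z)$ is single-valued along the support, two points of $D$ over the same $z$ carry equal $w$-values, so a pair in involution inside $D$ forces $w_k=-w_k=0$, i.e.\ a branch point of multiplicity at least two in $D$.

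The gap is in how you close that last case. You propose to show that the bad locus is a proper subvariety of $\mathcal{O}$ and conclude non-specialness only for \emph{generic} dynamic variables; but the theorem asserts non-specialness outright, and genericity in the phase variables is not the operative mechanism. The doubled-branch-point configuration is excluded by the nondegeneracy of the spectral curve, whose coefficients are integrals of motion and hence fixed on the whole level set. Concretely: suppose $\gamma$ vanishes to order at least $2$ at $z_0\neq 0$ with $\alpha(z_0)=0$ (the only way to get $w=0$ there). Then $H=\alpha^2+\beta\gamma$ gives $H(z_0)=0$ and $H'(z_0)=2\alpha(z_0)\alpha'(z_0)+\beta'(z_0)\gamma(z_0)+\beta(z_0)\gamma'(z_0)=0$, so $z_0$ is a multiple root of $z^2H(z)$: two branch points of \eqref{CurveGN} collide and the genus drops below $N$, contradicting the standing assumption (Section 4) that all branch points are distinct. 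Likewise $z_0=0$ in the support would force $\gamma_{-1}=0$, hence $\rmh_{-1}=\beta_{-1}\gamma_{-1}=0$, and $z^2H(z)$ acquires a double root at the origin --- again a degenerate curve. So whenever the spectral curve is nondegenerate the bad locus is empty, not merely of positive codimension, and your ``delicate point'' evaporates. A smaller inaccuracy: your closing claim that non-specialness is \emph{equivalent} to the $z_k$ being pairwise distinct is wrong in one direction --- a double root of $\gamma$ at a non-branch point contributes $2P$ with $P$ not fixed by the involution $(z,w)\mapsto(z,-w)$, which is still non-special.
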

\begin{theo}\label{T:QCanonVars}
The coordinates $\{(z_k,w_k)\}_{k=1}^{N}$ defined in Theorem\;\ref{T:SoVdiv}
have the following properties:
\begin{enumerate}
\renewcommand{\labelenumi}{\arabic{enumi})}
\item a pair $(z_k, w_k)$ is quasi-canonically conjugate with respect to the  Lie-Poisson
bracket \eqref{MKdVPoiBra}:
\begin{equation}\label{CanonCoord}
\{z_k, z_j\} = 0,\qquad  \{z_k, w_j\} = \epsilon\, z_k\, \delta_{k,j},\qquad
\{w_k, w_j\} = 0.
\end{equation}
\item the canonical $1$-form is
\begin{equation}\label{Liouv1Form}
 \epsilon \sum_{k=1}^N  z_k w_k \rmd z_k.
\end{equation}
\end{enumerate}
\end{theo}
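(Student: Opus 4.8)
The plan is to push every bracket among the coordinates $z_k,w_k$ down to brackets of the two generating polynomials
\[
\gamma(z)=\sum_{m=0}^{N}\gamma_{2m-1}z^m,\qquad \alpha(z)=\sum_{m=0}^{N}\alpha_{2m}z^m .
\]
First I would observe that, once the constraints \eqref{MKdVConstr} have eliminated the $\beta_{2m-1}$, the remaining coordinates $\gamma_{2m-1},\alpha_{2m}$ close under \eqref{MKdVPoiBra}: the only surviving relation is $\{\gamma_{2m-1},\alpha_{2n}\}=-\gamma_{2(m+n)+1}$, with $\{\gamma_{2m-1},\gamma_{2n-1}\}=\{\alpha_{2m},\alpha_{2n}\}=0$ (these mirror $[f,f]=[h,h]=0$, $[h,f]=-2f$ in $\mathfrak{sl}(2)$). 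Because the functions fixing the orbit are Casimirs, the Lie--Poisson bracket descends to this subalgebra with no Dirac correction. Multiplying the structure relation by $z^mz'^n$, summing, and using $\sum_{m+n=k-1}z^mz'^n=(z^k-z'^k)/(z-z')$ then yields the master identity
\[
\{\gamma(z),\alpha(z')\}=-\frac{\gamma(z)-\gamma(z')}{z-z'},
\]
a polynomial in $z,z'$, together with $\{\gamma(z),\gamma(z')\}=\{\alpha(z),\alpha(z')\}=0$.

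Next I would transfer these to the zeros $z_k$ and the values $\alpha(z_k)$ by implicit differentiation. Since $z_k$ is determined by $\gamma(z_k)=0$, differentiating this identity along the Hamiltonian flow of any $F$ gives
\[
\{z_k,F\}=-\frac{\{\gamma(z),F\}\big|_{z=z_k}}{\gamma'(z_k)},\qquad \gamma'(z_k):=\partial_z\gamma(z)\big|_{z=z_k},
\]
where $\gamma'(z_k)\neq0$ because the divisor of Theorem~\ref{T:SoVdiv} is non-special, so the zeros are simple and distinct. Taking $F=\gamma(z')$ and invoking $\{\gamma(z),\gamma(z')\}=0$ gives $\{z_k,z_j\}=0$ immediately. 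Taking $F=\alpha(z')$, inserting the master identity, and using $\gamma(z_k)=0$ produces $\{z_k,\alpha(z')\}=-\gamma(z')/\big(\gamma'(z_k)(z_k-z')\big)$, whose value at $z'=z_j$ is $\delta_{kj}$; the diagonal case is the removable limit $z'\to z_k$, where $\gamma(z')/(z_k-z')\to-\gamma'(z_k)$. Since $w_j=\epsilon z_j\alpha(z_j)$ and $\{z_k,z_j\}=0$, the Leibniz rule gives $\{z_k,w_j\}=\epsilon z_j\{z_k,\alpha(z_j)\}=\epsilon z_k\delta_{kj}$. For the last relation I would expand $\{w_k,w_j\}$ by Leibniz into four terms: the pieces containing $\{z_k,z_j\}$ and $\{\alpha(z_k),\alpha(z_j)\}$ vanish --- the latter reducing, by a second application of the implicit-differentiation identity, to $\delta_{kj}\big(\alpha'(z_k)-\alpha'(z_j)\big)\equiv0$ --- while the two surviving cross-terms cancel on the diagonal by antisymmetry and carry $\delta_{kj}$ off it, so $\{w_k,w_j\}=0$.

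For the second assertion I would invert the now-diagonal Poisson tensor of part~1 to obtain the symplectic form $\omega$ on the orbit and then verify that the stated expression is a primitive, $\omega=\mathrm{d}\theta$, by a single exterior differentiation, using $\mathrm{d}(z_kw_k)\wedge\mathrm{d}z_k=z_k\,\mathrm{d}w_k\wedge\mathrm{d}z_k$. Nondegeneracy of $\omega$ holds on the $2N$-dimensional orbit precisely because the $z_k$ are distinct and nonzero on the non-special divisor, so that $(z_k,w_k)$ are genuine quasi-canonical coordinates there.

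I expect the main obstacle to be the passage from the generating functions to their zeros: one must justify the implicit-differentiation identity on the constraint surface, confirm that the off-diagonal contributions vanish while the diagonal ones survive as the correct removable limits, and track the cancellations in $\{w_k,w_j\}$. Every step is confined to the open dense locus where the divisor is generic (distinct, nonzero $z_k$), which is exactly where Theorem~\ref{T:SoVdiv} places the variables of separation.
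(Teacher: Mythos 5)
Your part 1) is correct and takes essentially the route the paper itself intends: the paper gives no self-contained argument but defers to the KdV case in \cite[Sect.\,4]{BerKdV2024}, where the same two-step scheme is used, namely a generating identity for the polynomials followed by implicit differentiation at the zeros of $\gamma$. Your telescoping computation of $\{\gamma(z),\alpha(z')\}=-\big(\gamma(z)-\gamma(z')\big)/(z-z')$ is the right master identity (the $\gamma$'s and $\alpha$'s do close under \eqref{MKdVPoiBra}, and the eliminated constraints are Casimirs, so no Dirac correction arises), and your handling of the diagonal as a removable limit, together with the cancellation $\delta_{k,j}\big(\alpha'(z_k)-\alpha'(z_j)\big)=0$ inside $\{w_k,w_j\}$, is exactly right. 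One quibble: non-specialty alone does not force the zeros to be simple and distinct (a divisor containing $2P$ with $P$ not a branch point is still non-special); since you confine the computation to the open dense generic locus and extend by continuity, this is harmless. As a sanity check, your bracket reproduces \eqref{DzDxEqs}: with $\partial h_{N-2}/\partial w_k = 2w_k/\big(z_k\prod_{j\neq k}(z_k-z_j)\big)$ one gets $\rmd z_k/\rmd\sfx = -\epsilon z_k\,\partial h_{N-2}/\partial w_k = -2w_k/\prod_{j\neq k}(z_k-z_j)$ at $\epsilon=1$.

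Part 2) is where your proposal has a genuine gap: the verification you announce does not close, and you assert its success without performing it. Inverting the Poisson tensor of \eqref{CanonCoord} gives, up to the overall sign convention,
\begin{equation*}
\omega \;=\; \epsilon\sum_{k=1}^{N}\frac{1}{z_k}\,\rmd z_k\wedge \rmd w_k ,
\end{equation*}
whereas the exterior derivative of the printed form \eqref{Liouv1Form} is $\epsilon\sum_k z_k\,\rmd w_k\wedge \rmd z_k$, off by a factor $-z_k^{2}$ in each block; so $\epsilon\sum_k z_k w_k\,\rmd z_k$ is \emph{not} a primitive of the symplectic structure determined by part 1 (it would instead correspond to a bracket $\{z_k,w_j\}\propto z_k^{-1}\delta_{k,j}$). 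A primitive consistent with \eqref{CanonCoord} is $\epsilon\sum_k (w_k/z_k)\,\rmd z_k = \epsilon\sum_k \alpha(z_k)\,\rmd z_k$, which your own part-1 computation already exhibits, since $\{z_k,\alpha(z_j)\}=\delta_{k,j}$ shows that $\big(z_k,\epsilon\,\alpha(z_k)\big)$ (equivalently $(\log z_k, w_k)$) is an exactly canonical pair. Had you carried out the "single exterior differentiation" using $\rmd(z_kw_k)\wedge\rmd z_k = z_k\,\rmd w_k\wedge\rmd z_k$, you would have hit this mismatch immediately. The correct move is either to derive the weighted form $\epsilon\sum_k (w_k/z_k)\,\rmd z_k$ from part 1 and flag the discrepancy with the stated \eqref{Liouv1Form}, or to explain what alternative convention makes the printed expression a canonical $1$-form; your proposal does neither, so part 2 as written is unproved.
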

Proofs of the theorems are similar to the ones given in \cite[Sect.\,4]{BerKdV2024}.

In what follows we assign $\epsilon = 1$.

%-----------------------------------------------
\subsection{Equation of motion for variables of separation}
From \eqref{PsiMatrEqs} we find
\begin{gather}\label{DgammaDx}
\begin{split}
&\frac{\rmd}{\rmd \sfx} \gamma(z) = - 2 \alpha_{2N-2} \gamma(z) + 2 \rmb z \alpha(z),\\
&\frac{\rmd}{\rmd \sft} \gamma(z) = - 2 (\alpha_{2N-2}z + \alpha_{2N-4}) \gamma(z) 
+ 2 (\rmb z^2 + \gamma_{2N-3} z) \alpha(z).
\end{split}
\end{gather}
Note that zeros of $\gamma(z)$ are  functions of $\sfx$ and $\sft$, that is
 $\gamma(z) = \rmb \prod_{k=1}^N (z-z_k(\sfx,\sft))$,
and all dynamic variables are functions of $\sfx$ and $\sft$ as well. 
From \eqref{DgammaDx}, taking into account 
\eqref{PointsDef1}, we find
\begin{gather}\label{DzDxEqs}
\frac{\rmd z_k}{\rmd \sfx}
= \frac{- 2 w_k}{\prod_{j\neq k}^N (z_k - z_j)},\qquad
\frac{\rmd z_k}{\rmd \sft}
 = \frac{2 w_k \sum_{j\neq k} z_j}{\prod_{j\neq k} (z_k - z_j) }, \quad k=1,\, \dots,\, N.
\end{gather}

Let $D = \sum_{k=1}^N (z_k,w_k)$ be a non-special divisor on the spectral curve
related to the dynamic variables $\gamma_{2m-1}$, $\alpha_{2m}$, $m=0$, \ldots, $N\,{-}\,1$,
by \eqref{PointsDef1}.
Then
\begin{equation*}
u = \mathcal{A}(D) =
\sum_{k=1}^N \int_{\infty}^{(z_k,w_k)} 
\small \begin{pmatrix} z^{N-1} \\ \vdots \\ z \\ 1 \end{pmatrix} 
\frac{\rmb \rmd z}{-2w}.
\end{equation*}
 Applying \eqref{DzDxEqs}, we find
\begin{subequations}
\begin{align*}
&\frac{\rmd u_{2n-1}}{\rmd \sfx} = \sum_{k=1}^N \frac{\rmb z_k^{N-n}}{-2w_k} \frac{\rmd z_k}{\rmd \sfx}
= \sum_{k=1}^N \frac{\rmb z_k^{N-n}}{\prod_{j\neq k}^N (z_k - z_j)} =  \rmb \delta_{n,1}, \\
& \frac{\rmd u_{2n-1}}{\rmd \sft} = \sum_{k=1}^N \frac{\rmb z_k^{N-n}}{-2w_k} \frac{\rmd z_k}{\rmd \sft}
= - \sum_{k=1}^N  \frac{\rmb z_k^{N-n}\sum_{j\neq k} z_j}{\prod_{j\neq k} (z_k - z_j) } = \rmb \delta_{n,2},
\end{align*}
\end{subequations}
and so  obtain
\begin{gather}\label{uInxt}
\begin{split}
u_{1} &= \rmb \sfx + C_1,\qquad
u_{3} = \rmb \sft + C_{3},  \\
u_{2n-1} &= \rmb \rmc_{2n-1}  + C_{2n-1}, = \const,\quad n=3, \ldots, N.
\end{split}
\end{gather}
Here  $\rmc_{2n-1}$, $n=3, \ldots, N$, are arbitrary real constants, and
 $\bm{C} = (C_1$, $C_3$, \ldots, $C_{2N-1})^t$ is a fixed constant vector subject to the reality conditions.

%===============================
\section{Uniformization of the spectral curve}
The spectral curve \eqref{CurveGN} is reduced to the canonical  form  $\mathcal{V}$ of genus $g=N$
\begin{equation}\label{V22g1Eq}
f(x,y) \equiv -y^2 + x^{2 g+1} + \sum_{i=0}^{2g} \lambda_{2i+2} x^{2g-i} = 0,
\end{equation}
by  applying the transformation
\begin{equation}\label{SectrCToCanonC}
\begin{split}
z &\mapsto x,\qquad 
w \mapsto  y = w/\rmb,\qquad F(x, \rmb y) = \rmb^2 f(x,y),\\
&\rmh_{n}\; (\text{or } \rmr_n)= \rmb^2 \lambda_{4g-2-2n},\ \ n=-1,\,\dots,\, 2N-2.
\end{split}
\end{equation}
Note, that one of the branch points of \eqref{CurveGN} is fixed at $0$,
and so the corresponding canonical  form has $\lambda_{4g+2}=0$.

%-------------------------------------------------------
\subsection{Family of hyperelliptic curves}\label{ss:HypC}
Let finite branch points of \eqref{V22g1Eq} be 
 $(e_i,0)$, and denoted by $e_i$, $i=1$, \ldots, $2g+1$ for brevity. 
 One more branch point $e_0$  is located at infinity, and serves as the basepoint.
We assume that all branch points are distinct, and so the curve is not degenerate, that is, the genus equals $g$.

Finite branch points are enumerated in the ascending order of their real and imaginary parts.
The Riemann surface of $\mathcal{V}$ is constructed by means of the
 monodromy path through all branch points, according to the order; the path starts at infinity and ends at infinity,
see the orange line on fig.~\ref{cyclesOdd}. 
For more details on constructing the Riemann surfaces of hyperelliptic curves 
suitable for computation see \cite[Sect.\,3]{BerCompWP2024}.

A homology basis is defined after H.\,Baker \cite[p.\,297]{bakerAF}.
Cuts are made between points $e_{2k-1}$ and $e_{2k}$ with $k$ from $1$ to $g$. 
One more cut starts at  $e_{2g+1}$ and ends at infinity.
Canonical homology cycles are defined as follows.
Each $\mathfrak{a}_k$-cycle, $k=1$, \ldots $g$, encircles the cut $(e_{2k-1},e_{2k})$ counter-clockwise,
and each $\mathfrak{b}_k$-cycle emerges from the cut $(e_{2g+1},\infty)$ 
and enters the cut $(e_{2k-1},e_{2k})$, see  fig.\;\ref{cyclesOdd}. 

\begin{figure}[h]
\begin{flushleft}
 1a. All real branch points 
\end{flushleft} 
\includegraphics[width=0.67\textwidth]{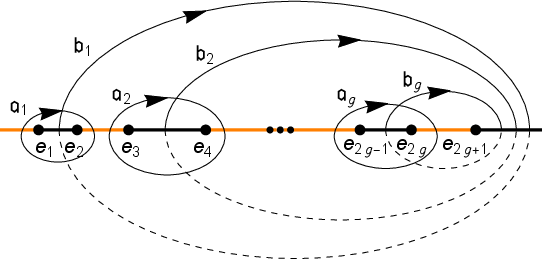} 
\begin{flushleft}
1b. Branch points with $g$ complex conjugate pairs
\end{flushleft} 
\includegraphics[width=0.67\textwidth]{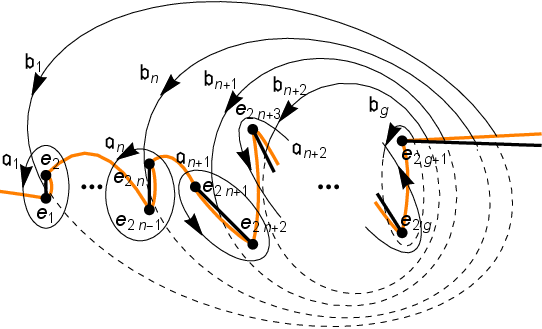} \\
\caption{Cuts and cycles on a hyperelliptic curve.} \label{cyclesOdd}
\end{figure}

We are interested in two cases: (i) when all finite branch points are real, see fig.~\ref{cyclesOdd}a,
and (ii) $2g$ branch points form $g$ complex conjugate pairs
and the remaining finite branch point is zero, $e_{2n+1}=0$, see fig.~\ref{cyclesOdd}b.
Fig.~\ref{cyclesOdd}a represents a spectral curve of the KdV hierarchy, and
fig.~\ref{cyclesOdd}b a spectral curve of  the sine-Gordon hierarchy, 
as shown in \cite[Sect.\,6]{BerKdV2024}, and \cite[Sect.\,6]{BerSinG2024},
respectively.

Let first kind differentials $\rmd u = (\rmd u_1,\rmd u_3, \dots, \rmd u_{2g-1})^t$ 
and second kind differentials $\rmd r = (\rmd r_1,\rmd r_3, \dots, \rmd r_{2g-1})^t$ 
on $\mathcal{V}$ form a system of
 associated differentials, see \cite[\S\,138]{bakerAF}, and be
defined as in \cite[\textit{Ex.\,i}, p.\,195]{bakerAF}.
Actually, 
\begin{subequations}\label{K1K2DifsGen}
\begin{align}
& \rmd u_{2n-1} =  \frac{x^{g-n} \rmd x}{\partial_y f(x,y)},\quad n=1,\dots, g,\label{K1DifsGen} \\
& \rmd r_{2n-1} =  \frac{ \rmd x}{\partial_y f(x,y)} 
 \sum_{j=1}^{2n-1} (2n-j) \lambda_{2j-2} x^{g+n-j},\  \lambda_0=1,\quad n=1,\dots, g. \label{K2DifsGen}
\end{align}
\end{subequations}
Indices of $\rmd u_{2n-1}$ display the orders of zeros at infinity, 
and  indices of $\rmd r_{2n-1}$ display the orders of poles, which are located at infinity.
Note the differentials are not normalized.

First kind periods along the canonical cycles $\mathfrak{a}_k$,  $\mathfrak{b}_k$, $k=1$, \ldots, $g$, 
 are defined as follows
\begin{gather}\label{FKD}
  \omega_k = \oint_{\mathfrak{a}_k} \rmd u,\qquad\qquad
  \omega'_k = \oint_{\mathfrak{b}_k} \rmd u.
\end{gather}
The vectors $\omega_k$, $\omega'_k$ 
form first kind period matrices $\omega$, $\omega'$, respectively. They also  generate 
the period lattice $\{\omega, \omega'\}$, with respect to which the Jacobian variety of $\mathcal{V}$
is defined: $\Jac(\mathcal{V}) \,{=}\, \Complex^g/\{\omega, \omega'\}$.
Let $u=(u_1$, $u_3$, \ldots, $u_{2g-1})^t$ be not coordinates (normalized) of $\Jac(\mathcal{V})$.

The Abel map is defined by
\begin{gather*}%\label{AbelM}
 \mathcal{A}(P) = \int_{\infty}^P \rmd u,\qquad P=(x,y)\in \mathcal{V},
\end{gather*}
and on  a positive divisor $D = \sum_{k =1}^n (x_k,y_k)$  by $\mathcal{A}(D) =  \sum_{k =1}^n  \mathcal{A}(P_k)$.
The Abel map is one-to-one on the $g$-th symmetric power of the curve.

Similarly to \eqref{FKD}, second kind period matrices $\eta$, $\eta'$ are composed of the columns
\begin{gather}\label{SKD}
  \eta_k = \oint_{\mathfrak{a}_k} \rmd r,\qquad\qquad
  \eta'_k = \oint_{\mathfrak{b}_k} \rmd r.
\end{gather}

%----------------------------------------
\subsection{Entire functions}
The Riemann theta function  is defined by 
\begin{gather*}%\label{ThetaDef}
 \theta(v;\tau) = \sum_{n\in \Integer^g} \exp \big(\imath \pi n^t \tau n + 2\imath \pi n^t v\big),
\end{gather*}
where $v = \omega^{-1}u$ are normalized coordinates,
and the Riemann period matrix $\tau$ is defined by $\tau = \omega^{-1}\omega'$,
and  belongs to the Siegel upper half-space.
The theta function with characteristic $[\varepsilon]$ is defined by
\begin{equation*}%\label{ThetaDefChar}
 \theta[\varepsilon](v;\tau) = \exp\big(\imath \pi (\varepsilon'{}^t/2) \tau (\varepsilon'/2)
 + 2\imath \pi  (v+\varepsilon/2)^t \varepsilon'/2\big) \theta(v+\varepsilon/2 + \tau \varepsilon'/2;\tau),
\end{equation*}
where $[\varepsilon]= (\varepsilon', \varepsilon)^t$
 is a $2\times g$ matrix, all components of $\varepsilon$, and $\varepsilon'$
are real values within the interval $[0,2)$. Modulo $2$ addition
is defined on characteristics. 

Every point $u$ within the fundamental domain of  $\Jac(\mathcal{V})$ 
can be represented by its characteristic $[\varepsilon]$ as follows
\begin{equation*}
u =  \tfrac{1}{2}  \omega \varepsilon +  \tfrac{1}{2}  \omega' \varepsilon'.
\end{equation*}
Characteristics with values $0$ and $1$ correspond to
half-periods, which are Abel images of divisors composed of branch points.
Such a characteristic $[\varepsilon]$ is odd whenever $\varepsilon^t \varepsilon'  = 1$ ($\modR 2$), 
and even whenever $\varepsilon^t \varepsilon' = 0$ ($\modR 2$). The theta function with characteristic
has the same parity as its characteristic.

The modular invariant entire function on $\Complex^g \,{\supset}\, \Jac(\mathcal{V})$  is called the sigma function, 
which we define after \cite[Eq.(2.3)]{belHKF}:
\begin{equation}\label{SigmaThetaRel}
\sigma(u) = C \exp\big({-}\tfrac{1}{2} u^t \varkappa u\big) \theta[K](\omega^{-1} u;  \omega^{-1} \omega'),
\end{equation}
where $[K]$ is the characteristic of the vector $K$ of Riemann constants, and 
$\varkappa = \eta \omega^{-1}$ is a symmetric matrix.

In what follows we use the multiply periodic $\wp$-functions, see \cite[]{bakerAF},
also known as the Klinian $\wp$-functions, see \cite{belHKF},
\begin{gather}\label{WPdef}
\wp_{i,j}(u) = -\frac{\partial^2 \log \sigma(u) }{\partial u_i \partial u_j },\qquad
\wp_{i,j,k}(u) = -\frac{\partial^3 \log \sigma(u) }{\partial u_i \partial u_j \partial u_k},
\end{gather}
which are defined on $\Jac(\mathcal{V}) \backslash \Sigma$, 
where $\Sigma = \{u \mid \sigma(u)=0\}$ denotes the theta divisor, see \cite[p.\,38]{DN1982},
 in not normalized coordinates.

%-------------------------------------
\subsection{Jacobi inversion problem}
A solution of the Jacobi inversion problem on a hyperelliptic curve is proposed in  \cite[Art.\;216]{bakerAF},
see also \cite[Theorem 2.2]{belHKF}.
Let $u = \mathcal{A}(D )$ be the Abel image of  a non-special positive divisor  
$D \in \mathcal{V}^g$. Then $D$ is uniquely defined by the system of equations 
\begin{subequations}\label{EnC22g1}
\begin{align}
&\mathcal{R}_{2g}(x;u) \equiv x^{g} -  \sum_{i=1}^{g} x^{g-i}  \wp_{1,2i-1}(u) = 0, \label{R2g}\\ 
&\mathcal{R}_{2g+1}(x,y;u) \equiv 2 y + \sum_{i=1}^{g} x^{g-i}  \wp_{1,1,2i-1}(u) = 0. \label{R2g1}
\end{align}
\end{subequations}

The functions $\wp_{1,2i-1}$, $\wp_{1,1,2i-1}$, $i=1$, \ldots, $g$,
uniformize the hyperelliptic curve \eqref{V22g1Eq},
see \cite[Sect.\,4]{BerCompWP2024},
and serve as basis $\wp$-functions 
in the abelian function field associated with $\mathcal{V}$, see \cite[Sect.\,3.2]{BerWPFF2025}.

%======================================
\section{Algebraic integration}\label{s:AGI} 

Separation of variables provides a divisor defined by \eqref{PointsDef1}
on the spectral curve \eqref{CurveGN}. On the other hand,
such a divisor is defined by a solution of the Jacobi inversion problem, which is given by
\eqref{EnC22g1} on the canonical curve \eqref{V22g1Eq}. 

We will work with $\wp$-functions associated with $\mathcal{V}$. 
All dynamic variables of a $N$-gap hamiltonian system are
expressible in terms of the basis functions $\wp_{1,2i-1}(u)$, $\wp_{1,1,2i-1}(u)$, $i=1$, \ldots $N$. 

\begin{lemma}
The dynamic variables $\{\gamma_{2i-1}$, $\alpha_{2i} \mid i\,{=}\,0, \dots, N\,{-}\,1\}$,
which describe a $2N$-dimensional phase space in the mKdV hierarchy, 
are expressed through the basis $\wp$-functions as follows
\begin{subequations}\label{WPAlphaGamma}
\begin{align}
&\gamma_{2(N-i)-1} = - \rmb \wp_{1,2i-1}(u), \quad i=1,\dots, N; \label{WPGamma}\\
&\alpha_{2N-2} = - \frac{\rmb \wp_{1,1,2N-1}(u)}{2 \wp_{1,2N-1}(u)}, \\
&\alpha_{2(N-i)} = - \frac{\rmb}{2} \Big(\wp_{1,1,2i-3}(u) -  
\frac{\wp_{1,2i-3}(u) \wp_{1,1,2N-1}(u)}{\wp_{1,2N-1}(u)} \Big), 
\ \  i=2,\dots, N.  \label{WPAlpha}
\end{align}
\end{subequations}
\end{lemma}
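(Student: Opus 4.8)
The plan is to read both dynamic variables directly off the two equations \eqref{EnC22g1} of the Jacobi inversion problem, after identifying the separation divisor of Theorem~\ref{T:SoVdiv} with the divisor $D$ whose Abel image is $u$. Under the transformation \eqref{SectrCToCanonC} the points $(z_k,w_k)$ become $(x_k,y_k)=(z_k,\,w_k/\rmb)$ on $\mathcal{V}$, and the argument $u$ is exactly the Abel image computed in \eqref{uInxt}; with $\epsilon=1$ the defining relations \eqref{PointsDef1} read $\gamma(z_k)=0$ and $w_k=z_k\alpha(z_k)$. Throughout I use $g=N$.

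I would first treat $\gamma$. Since $\gamma$ has degree $N$, leading coefficient $\rmb$, and vanishes at $z_1,\dots,z_N$, we have $\gamma(z)=\rmb\prod_{k=1}^N(z-z_k)=\rmb\,\mathcal{R}_{2g}(z;u)$, the last equality by \eqref{R2g} because the $z_k$ are precisely the roots of $\mathcal{R}_{2g}$. Writing $\gamma(z)=\rmb z^N+\sum_{i=1}^N\gamma_{2(N-i)-1}z^{N-i}$ and matching coefficients against $\rmb\big(z^N-\sum_{i=1}^N z^{N-i}\wp_{1,2i-1}(u)\big)$ yields \eqref{WPGamma} immediately.

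The formulas for the $\alpha$'s require the companion relation \eqref{R2g1}. Evaluating it at each divisor point gives $2y_k=-\sum_{i=1}^N z_k^{N-i}\wp_{1,1,2i-1}(u)$, and substituting $y_k=w_k/\rmb=z_k\alpha(z_k)/\rmb$ produces the $N$ scalar conditions
$$\frac{2z_k\alpha(z_k)}{\rmb}+\sum_{i=1}^N z_k^{N-i}\wp_{1,1,2i-1}(u)=0,\qquad k=1,\dots,N.$$
I would then introduce the auxiliary polynomial $Q(z)=\tfrac{2}{\rmb}z\alpha(z)+\sum_{i=1}^N z^{N-i}\wp_{1,1,2i-1}(u)$, note that it has degree $N$ with leading coefficient $2\alpha_{2N-2}/\rmb$ (since $z\alpha(z)$ has degree $N$ while the sum has degree $N-1$) and vanishes at all $z_k$, hence $Q(z)=\tfrac{2\alpha_{2N-2}}{\rmb^2}\,\gamma(z)$. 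Rewriting $\gamma$ by the first step and comparing coefficients of $Q$ degree by degree finishes the computation: the constant term (absent on the left, as $z\alpha(z)$ carries no constant term) gives $\alpha_{2N-2}=-\rmb\wp_{1,1,2N-1}/(2\wp_{1,2N-1})$, and the coefficient of $z^{N-i+1}$, which carries $\alpha_{2(N-i)}$, gives \eqref{WPAlpha} after this value of $\alpha_{2N-2}$ is substituted.

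The main obstacle is the interpolation step: the identity $Q=\tfrac{2\alpha_{2N-2}}{\rmb^2}\gamma$ relies on $Q$ being a degree-$N$ polynomial with the $N$ prescribed roots $z_k$, which is transparent only when these roots are distinct. I would therefore carry out the coefficient comparison at generic $u$, where the non-specialness of $D$ from Theorem~\ref{T:SoVdiv} guarantees distinct $z_k$ and a reduced divisor; the resulting relations \eqref{WPGamma}--\eqref{WPAlpha} are then equalities between meromorphic functions of $u$ on $\Jac(\mathcal{V})\setminus\Sigma$, and so extend from the generic locus to all admissible $u$ by analytic continuation. A minor point to verify along the way is the degree bookkeeping that isolates $\alpha_{2N-2}$ from the constant term and keeps the missing constant term of $z\alpha(z)$ consistent with the right-hand side.
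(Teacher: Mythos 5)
Your proposal is correct and takes essentially the same route as the paper's proof: both identify the separation divisor defined by \eqref{PointsDef1} (with $w\mapsto y=w/\rmb$ under \eqref{SectrCToCanonC}) with the Abel pre-image of $u$ given by the Jacobi inversion formulas \eqref{EnC22g1}, and read off \eqref{WPGamma}--\eqref{WPAlpha} by matching polynomial coefficients. Your write-up simply makes explicit the interpolation step, the degree bookkeeping isolating $\alpha_{2N-2}$ from the constant term, and the genericity/analytic-continuation point that the paper leaves implicit.
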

\begin{proof}
Taking into account the transformation \eqref{SectrCToCanonC}, we find
how $\rmd u$ and $\rmd r$  are expressed in terms of the coordinates of \eqref{CurveGN},
namely
\begin{subequations}
\begin{align}
& \rmd u_{2n-1} =  \frac{\rmb z^{N-n} \rmd z}{\partial_w F(z,w)},\quad n=1,\dots, N,\label{K1Difs} \\
& \rmd r_{2n-1} =  \frac{ \rmd z}{\rmb \partial_w F(z,w)} \sum_{j=1}^{2n-1} (2n-j) \rmh_{2N-j} z^{N+n-j}, 
\quad n=1,\dots, N, \label{K2Difs}
\end{align}
\end{subequations}
where the notation $\rmh_n$ is used for $\rmh_n$ and $\rmr_n$, and $\rmh_{2N-1} = \rmb^2$.
Then the Abel pre-image $D = \sum_{k=1}^N (z_k,w_k)$ of any $u\in \Jac(\mathcal{V}) \backslash \Sigma$
is obtained from  \eqref{EnC22g1}.

On the other hand, the $N$ values $z_k$ are zeros of the polynomial
$\gamma(z)$, and the $N$ values $w_k$ are obtained from $w_k = z_k \alpha(z_k)$, cf.\,\eqref{PointsDef1}. 
\end{proof}

Expressions for $\beta_{2n-1}$, $n=0$, \ldots, $N-1$ 
are obtained from the constraints \eqref{MKdVConstr},
in particular (the argument $u$ is omitted for brevity)
\begin{subequations}\label{WPBeta}
\begin{align}
&\beta_{2N-3} =  \rmb \bigg( {-} \frac{\wp_{1,1,2N-1}^2}{4 \wp_{1,2N-1}^2} + \wp_{1,1} + \lambda_2 \bigg),\\
&\beta_{2N-5} = \rmb \bigg(\frac{\wp_{1,1,2N-1}^2}{4 \wp_{1,2N-1}^2} \wp_{1,1}
- \frac{\wp_{1,1,2N-1}}{2 \wp_{1,2N-1}}  \wp_{1,1,1}
+\wp_{1,3} + \wp_{1,1}^2 + \lambda_2 \wp_{1,1} + \lambda_4\bigg), \label{Beta2N5} \\
&\beta_{2N-7} = \rmb \bigg(\frac{\wp_{1,1,2N-1}^2}{4 \wp_{1,2N-1}^2} \wp_{1,3}
- \frac{\wp_{1,1,2N-1}}{2 \wp_{1,2N-1}}  \wp_{1,1,3} - \tfrac{1}{4} \wp_{1,1,1}^2 + \wp_{1,5} \\
&\qquad\qquad + (2\wp_{1,1}  + \lambda_2) \wp_{1,3} + \wp_{1,1}^3 
+ \lambda_2 \wp_{1,1}^2 + \lambda_4 \wp_{1,1} + \lambda_6 \bigg), \notag 
\quad \text{etc.}
\end{align}
\end{subequations}

The expressions for $\alpha_{2m}$, $\beta_{2m-1}$, $m=0$, \ldots, $N-1$,
can be simplified by introducing $\wp$-functions not from the basis.

%---------------------------------------
\begin{theo}
The dynamic variables $\{\gamma_{2m-1}$, $\beta_{2m-1}$, $\alpha_{2m} \mid m\,{=}\,0, \dots, N\,{-}\,1\}$
of the $N$-gap subspace $\mathcal{M}_N$ are expressed as follows
\begin{subequations}\label{WPAllM}
\begin{align}
&\gamma_{2(N-i)-1} = - \rmb \wp_{1,2i-1}(u), \ \  i=1,\dots, N; \\
&\alpha_{2(N-i)} = - \frac{\rmb \wp_{1,2i-1,2N-1}(u)}{2 \wp_{1,2N-1}(u)},
\ \  i=1,\dots, N;  \label{WPAlphaM}\\
\begin{split}
&\beta_{-1} = \frac{- \rmh_{-1}}{\rmb \wp_{1,2N-1}(u)},\quad
\beta_{2(N-i)-1} = \rmb \frac{ \wp_{2i+1,2N-1}(u)}{\wp_{1,2N-1}(u)},
\ \   i=1,\dots, N-1.  \label{WPBetaM}
\end{split}
\end{align}
\end{subequations}
\end{theo}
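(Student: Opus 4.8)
The three families in \eqref{WPAllM} all flow from one structural fact: as a derivation on the orbit $\mathcal{O}$, the Jacobian translation $\partial_{u_{2N-1}}$ equals $\tfrac{1}{\rmb}\{h_{-1},\,\cdot\,\}$, where $h_{-1}=\beta_{-1}\gamma_{-1}$. This is the linearization of the hierarchy already used in \eqref{uInxt}: the flow of $h_{N-1-n}$ translates only $u_{2n-1}$, with speed $\rmb$, and the case $n=N$ gives the claim. Before using any flow I would record the two formulas that come for free. From \eqref{MKdVPhSp} the polynomial $\gamma(z)=\rmb\prod_{k=1}^{N}(z-z_k)$ is $\rmb$ times a monic polynomial, while the first Jacobi-inversion relation \eqref{R2g} gives $\prod_k(z-z_k)=z^{N}-\sum_{i=1}^{N}z^{N-i}\wp_{1,2i-1}(u)$; matching coefficients yields $\gamma_{2(N-i)-1}=-\rmb\wp_{1,2i-1}$, in particular $\gamma_{-1}=-\rmb\wp_{1,2N-1}$. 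Since the integral $h_{-1}=\beta_{-1}\gamma_{-1}$ is conserved and equals the curve coefficient $\rmh_{-1}$, dividing gives $\beta_{-1}=\rmh_{-1}/\gamma_{-1}=-\rmh_{-1}/(\rmb\wp_{1,2N-1})$.

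The $\alpha$-family follows by running the $h_{-1}$-flow on $\gamma$. In $\{h_{-1},\gamma(z)\}$ only the bracket $\{\beta_{-1},\gamma_{2m-1}\}=-2\alpha_{2m}$ survives (brackets among the $\gamma_{2m-1}$ vanish), so $\{h_{-1},\gamma(z)\}=-2\gamma_{-1}\alpha(z)$; dividing by $\rmb$ and inserting $\gamma_{-1}=-\rmb\wp_{1,2N-1}$ gives
\[
\alpha(z)=\frac{\partial_{u_{2N-1}}\gamma(z)}{2\wp_{1,2N-1}(u)}.
\]
Differentiating the closed form of $\gamma(z)$ and using $\partial_{u_{2N-1}}\wp_{1,2i-1}=\wp_{1,2i-1,2N-1}$ reads off \eqref{WPAlphaM} directly. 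As a check this collapses the two-line expression \eqref{WPAlpha} of the Lemma to a single line, the equivalence being the abelian identity $\wp_{1,2i-1,2N-1}=\wp_{1,2N-1}\wp_{1,1,2i-3}-\wp_{1,2i-3}\wp_{1,1,2N-1}$, which the same bracket computation in fact proves.

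The $\beta$-family uses the flow on the diagonal entry. The analogous computation, now with $\{\gamma_{-1},\alpha_{2n}\}=-\gamma_{2n+1}$ and $\{\beta_{-1},\alpha_{2n}\}=\beta_{2n+1}$ together with the grading of $\beta$ in \eqref{MKdVPhSp}, gives $\{h_{-1},\alpha(z)\}=\gamma_{-1}\beta(z)-\beta_{-1}z^{-1}\gamma(z)$, hence
\[
\beta(z)=\frac{\rmb\,\partial_{u_{2N-1}}\alpha(z)+\beta_{-1}z^{-1}\gamma(z)}{\gamma_{-1}},
\]
which presents each $\beta_{2(N-i)-1}$ as a rational function of data already expressed through the $\wp$-functions. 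The remaining, and I expect hardest, step is to recognise this rational combination as the single two-index function $\rmb\wp_{2i+1,2N-1}/\wp_{1,2N-1}$ of \eqref{WPBetaM}; equivalently, to reconcile \eqref{WPBeta} with \eqref{WPBetaM}. This is exactly a family of fundamental quadratic relations of the abelian function field of $\mathcal{V}$, whose lowest instance ($i=1$) is $\wp_{3,2N-1}\wp_{1,2N-1}+\tfrac{1}{4}\wp_{1,1,2N-1}^{2}=\wp_{1,2N-1}^{2}(\wp_{1,1}+\lambda_2)$. These relations are the only place where the structure of the function field (developed in the cited uniformization references) is genuinely needed; everything else is separation of variables plus the Poisson algebra \eqref{MKdVPoiBra}. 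A flow-free alternative is to insert Klein's formula for $\wp_{2i+1,2N-1}$ as a symmetric function of the divisor $\{(z_k,w_k)\}$ and match it against $\beta(z)=(H(z)-\alpha(z)^2)/\gamma(z)$ from \eqref{InvF}, but the algebraic core is the same set of relations. Throughout, non-specialness of the divisor (Theorem~\ref{T:SoVdiv}) keeps $\wp_{1,2N-1}\neq0$ away from a proper subvariety, so all denominators are legitimate.
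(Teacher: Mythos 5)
Your proposal is correct, and for the $\alpha$-family it takes a genuinely different route from the paper's. The paper first records the raw expressions \eqref{WPAlphaGamma} from matching separation of variables against the Jacobi inversion system \eqref{EnC22g1}, then substitutes them into the stationary flow to harvest the four-index identities \eqref{4IndWP}, into the evolutionary flow to get \eqref{3IndWP}, and into the flows of $h_{N-4},\dots,h_{-1}$ to get \eqref{3IndWPExt}; only then do \eqref{3IndWP} and \eqref{3IndWPExt} collapse the two-line formulas \eqref{WPAlpha} into \eqref{WPAlphaM}. You instead use the linearization $\{h_{-1},\cdot\}=\rmb\,\partial_{u_{2N-1}}$ and differentiate $\gamma(z)=\rmb\prod_k(z-z_k)$, obtaining \eqref{WPAlphaM} in one stroke and recovering \eqref{3IndWPExt} as a corollary rather than an input; that is a cleaner organization of the same underlying mechanism (the hierarchy's flows act as translations on $\Jac(\mathcal{V})$). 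The one step you should make explicit is the linearization itself: the paper proves translation only for the $h_{N-2}$- and $h_{N-3}$-flows (leading to \eqref{uInxt}), but your case follows by the identical computation — from $\{h_{-1},\gamma(z)\}=-2\gamma_{-1}\alpha(z)$ one gets
\begin{equation*}
\frac{\rmd z_k}{\rmd t}=\frac{2\gamma_{-1} w_k}{\rmb\, z_k \prod_{j\neq k}(z_k-z_j)},
\qquad
\frac{\rmd u_{2n-1}}{\rmd t}
=-\gamma_{-1}\sum_{k=1}^{N}\frac{z_k^{N-n-1}}{\prod_{j\neq k}(z_k-z_j)}
=\rmb\,\delta_{n,N},
\end{equation*}
the last equality using $\gamma_{-1}=\rmb(-1)^N\prod_j z_j$ and the Lagrange-type identity $\sum_k z_k^{-1}/\prod_{j\neq k}(z_k-z_j)=(-1)^{N-1}/\prod_j z_j$ (the sums vanish for $n<N$ as divided differences of polynomials of degree at most $N-2$), so the fact is sound, not merely analogical. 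On the $\beta$-family you and the paper converge: the paper derives \eqref{WPBeta} from the constraints and then eliminates the products $\wp_{1,1,2i-1}\wp_{1,1,2j-1}$ via the fundamental cubic relations \eqref{WP3IndSqRel}, and your lowest instance $\wp_{3,2N-1}\wp_{1,2N-1}+\tfrac14\wp_{1,1,2N-1}^2=\wp_{1,2N-1}^2(\wp_{1,1}+\lambda_2)$ is precisely \eqref{WP3IndSqRel} at $i=j=N$ with $\lambda_{4g+2}=0$ (the same identity the paper invokes for the new KdV solution); your flow-free alternative via $\beta(z)\gamma(z)=H(z)-\alpha(z)^2$ is in effect how \eqref{WPBeta} arises there. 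What your route buys is economy and a conceptual origin for \eqref{3IndWPExt}; what the paper's route buys is the stock of intermediate identities \eqref{4IndWP}, \eqref{3IndWP}, which it reuses later, e.g.\ in deriving the dynamic identity \eqref{MKdVWPEq}.
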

\begin{proof}

By substitution of \eqref{WPAlphaGamma} into equations of the stationary flow,
identities for $\wp_{1,1,1,2i-1}$, $i=1$, \ldots, $g$, are obtained from \eqref{AlphaStEq}, 
in particular 
\begin{subequations}\label{4IndWP}
\begin{align}
&\rho_{2N+2} =  - \wp_{1,1,1,2N-1} + 4 \wp_{1,1} \wp_{1,2N-1} + 2\lambda_2 \wp_{1,2N-1}
+ \frac{1}{2} \frac{\wp_{1,1,2N-1}^2}{\wp_{1,2N-1}}, \label{WP1112n1}\\
&\rho_{4}  = -\wp_{1,1,1,1} + 6 \wp_{1,1}^2 + 4 \wp_{1,3} + 4 \lambda_2 \wp_{1,1} + 2 \lambda_4. \label{WP1111}
\end{align}
\end{subequations}
Equations \eqref{GammaN1StEq}, \eqref{GammaStEq} vanish,
and \eqref{BetaN1StEq}, \eqref{BetaStEq} are equivalent to the identities for $\wp_{1,1,1,2i-1}$.

Next, the evolutionary flow has the form
\begin{subequations}
\begin{align}
&\frac{\partial \alpha_{0}}{\partial \sft} =  \beta_{2N-3} \gamma_{-1} - \gamma_{2N-3}\beta_{-1}, \qquad
\frac{\partial \beta_{-1}}{\partial \sft} =  2 \alpha_{2N-4} \beta_{-1}, \notag \\
&\frac{\partial \gamma_{-1}}{\partial \sft} = - 2 \alpha_{2N-4} \gamma_{-1}, \label{GammaN1EvEq}\\
&\frac{\partial \beta_{2m-1}}{\partial \sft} = 
2 (\alpha_{2N-4} \beta_{2m-1} + 2 \alpha_{2N-2} \beta_{2m-3}) \notag\\ %\label{BetaEvEq}  
&\qquad\qquad\ \ - 2 (\rmb \alpha_{2m-4} + \beta_{2N-3} \alpha_{2m-2}), \notag \\
&\frac{\partial \gamma_{2m-1}}{\partial \sft} = 
- 2 (\alpha_{2N-4} \gamma_{2m-1} + \alpha_{2N-2} \gamma_{2m-3}) \label{GammaEvEq}  \\
&\qquad\qquad\ \ + 2 (\rmb \alpha_{2m-4} + \gamma_{2N-3} \alpha_{2m-2}), \notag\\
&\frac{\partial \alpha_{2m}}{\partial \sft} = \rmb \big( \gamma_{2m-3} - \beta_{2m-3} \big) \notag \\ %\label{AlphaEvEq} 
&\qquad\qquad + \beta_{2N-3} \gamma_{2m-1} - \gamma_{2N-3}\beta_{2m-1},
\quad m=1,\dots, N-1.  \notag
\end{align}
\end{subequations}
Substituting \eqref{WPAlphaGamma} into \eqref{GammaN1EvEq}, 
\eqref{GammaEvEq}, we find the identities
\begin{gather}\label{3IndWP}
\begin{split}
&\wp_{1,3,2i-1} = \wp_{1,1,2i+1} 
+ \wp_{1,2i-1} \wp_{1,1,1} - \wp_{1,1} \wp_{1,1,2i-1} ,\quad i=2, \dots, N-1,\\
&\wp_{1,3,2N-1} = \wp_{1,2N-1} \wp_{1,1,1} - \wp_{1,1} \wp_{1,1,2N-1}.
\end{split}
\end{gather}
All other equations of the evolutionary flow are reduced to \eqref{3IndWP}
and the identities for $\wp_{1,1,1,2i-1}$ obtained from the stationary flow.

From the flows generated by other hamiltonians $h_{N-4}$, \ldots, $h_{-1}$
we find
\begin{gather}\label{3IndWPExt}
\wp_{1,2i-1,2N-1} = \wp_{1,2N-1} \wp_{1,1,2i-3} - \wp_{1,2i-3} \wp_{1,1,2N-1},\quad i=3, \dots, N.
\end{gather}

With the help of \eqref{3IndWP}, and \eqref{3IndWPExt}, the expressions \eqref{WPAlpha} 
for $\alpha_{2m}$, $m=0$, \ldots, $N-2$, 
are simplified as follows
\begin{align}\label{WPAlphaM}
&\alpha_{2(N-i)} = - \frac{\rmb \wp_{1,2i-1,2N-1}(u)}{2 \wp_{1,2N-1}(u)},
\quad i=1,\dots, N.
\end{align}

Expressions for $\beta_{2m-1}$, $m=0$, \ldots, $N-1$, are simplified by introducing  
functions $\wp_{2i-1,2N-1}$ through the fundamental cubic relations,
see \cite[Theorem 3.2]{belHKF},
\begin{multline}\label{WP3IndSqRel}
\wp_{1,1,2i-1} \wp_{1,1,2j-1} = 4(\wp_{1,1}+\lambda_2) \wp_{1,2i-1} \wp_{1,2j-1} \\
+ 4 (\wp_{1,2i-1} \wp_{1,2j+1} + \wp_{1,2i+1} \wp_{1,2j-1}) + 4 \wp_{2i+1,2j+1} \\
- 2 (\wp_{3,2i-1} \wp_{1,2j-1} + \wp_{1,2i-1} \wp_{3,2j-1}) - 2 (\wp_{2i-1,2j+3} + \wp_{2i+3,2j-1}) \\
+ \lambda_4 (\wp_{1,2j-1} \delta_{1,i}  + \wp_{1,2i-1}\delta_{1,j} )
+ 4 \lambda_{2+4i} \delta_{i,j} + 2 (\lambda_{4i} \delta_{i-1,j} + \lambda_{4j} \delta_{i,j-1}).
\end{multline}
The fundamental cubic relations enable to eliminate $\wp_{1,1,2i-1}\wp_{1,1,2j-1}$, $i$, $j\,{=}\,1$, \ldots, $g$,
which leads to \eqref{WPBetaM}.
\end{proof}

\begin{rem}
Expression from \eqref{hExprs} for the hamiltonians, written
in terms of the basis $\wp$-functions,  produce algebraic equations of $\Jac(\mathcal{V})$.
\end{rem}
% $\lambda_{2N+2}$, \ldots, $\lambda_{4N}$

\begin{theo}\label{T:MKdVESol}
The $N$-gap solution to the extended mKdV equation \eqref{MKdVEqPre} is 
\begin{equation}\label{MKdVExtSol}
\alpha(\sfx, \sft)  = - \frac{\rmb \wp_{1,1,2N-1}\big(\rmb (\sfx, \sft, \rmc_5,\dots, \rmc_{2N-1})^t + \bm{C}\big)}
{2 \wp_{1,2N-1}\big(\rmb (\sfx, \sft, \rmc_5,\dots, \rmc_{2N-1})^t + \bm{C} \big)},\quad \sfx, \sft \in \Real,
\end{equation}
where $\rmc_{2i-1}$, $i=3$, \ldots, $N$, are arbitrary real constants, and
$\bm{C}$ is a fixed constant vector subject to the reality conditions.
\end{theo}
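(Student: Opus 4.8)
The plan is to identify the function displayed in \eqref{MKdVExtSol} with the dynamic variable $\alpha_{2N-2}$ of the preceding Lemma, evaluated along the linearized flow, and thereby inherit the extended mKdV equation \eqref{MKdVEqPre} already derived in Section~\ref{ss:SineSinhParam}. First I would combine the expression $\alpha_{2N-2} = -\rmb\wp_{1,1,2N-1}(u)/\big(2\wp_{1,2N-1}(u)\big)$ from the Lemma with the integrated equations of motion \eqref{uInxt}, which pin the Abel image to the affine-linear trajectory $u = \rmb(\sfx,\sft,\rmc_5,\dots,\rmc_{2N-1})^t + \bm{C}$. Only the first two components carry the dynamics, with $u_1 = \rmb\sfx + C_1$ and $u_3 = \rmb\sft + C_3$, while the remaining components stay constant. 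This already produces the candidate \eqref{MKdVExtSol}; it remains to check that it solves \eqref{MKdVEqPre}.

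By the chain rule and \eqref{uInxt} the derivatives convert as $\partial/\partial\sfx = \rmb\,\partial/\partial u_1$ and $\partial/\partial\sft = \rmb\,\partial/\partial u_3$; substituting into \eqref{MKdVEqPre} and cancelling one factor of $\rmb$ reduces the PDE to the purely algebraic relation
\begin{equation*}
\frac{\partial \alpha}{\partial u_3} = \frac{1}{4\rmb^2}\frac{\partial}{\partial u_1}\bigg(\rmb^2\frac{\partial^2 \alpha}{\partial u_1^2} + 2\alpha\big(\rmr_{2N-2} - \alpha^2\big)\bigg)
\end{equation*}
among $\wp$-functions on $\Jac(\mathcal{V})\setminus\Sigma$. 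Since $\partial_{u_1}$ raises the index set of a $\wp$-function (for instance $\partial_{u_1}\wp_{1,2N-1} = \wp_{1,1,2N-1}$), every term on both sides is again rational in the basis functions. I would then clear the denominator $\wp_{1,2N-1}^2$ and reduce the result using the four-index relations \eqref{4IndWP}, the three-index relations \eqref{3IndWP} and \eqref{3IndWPExt} arising from the stationary and evolutionary flows, and the fundamental cubic relations \eqref{WP3IndSqRel}, with the constant matched through $\rmr_{2N-2} = \rmb^2\lambda_2$ read off from \eqref{SectrCToCanonC}. Conceptually, \eqref{MKdVEqPre} is an algebraic consequence of the hamiltonian flows \eqref{TwoFlowsEq}, and those flows are precisely the $\wp$-identities already established; hence the displayed relation collapses to a genuine identity on the Jacobian, valid at every point off $\Sigma$.

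The main obstacle will be the index bookkeeping in this reduction: one must track the functions produced by $\partial_{u_1}^3$ and by $\partial_{u_3}$ acting on the ratio $\wp_{1,1,2N-1}/\wp_{1,2N-1}$, and confirm that the cubic relations \eqref{WP3IndSqRel} eliminate exactly the products $\wp_{1,1,2i-1}\wp_{1,1,2j-1}$ generated by the nonlinear term $\alpha^3$, with all $\lambda$-coefficients falling into place after clearing $\wp_{1,2N-1}^2$. A secondary point worth a remark is that, being a ratio of $\wp$-functions restricted to a line in $\Jac(\mathcal{V})$, the solution is meromorphic and the identity holds wherever defined, i.e. for all real $\sfx,\sft$ whose trajectory avoids $\Sigma$ and the zeros of $\wp_{1,2N-1}$; the blow-ups correspond exactly to the special divisors excluded in the statement, and genuine boundedness is the separate matter of the reality conditions treated in Section~\ref{s:NLW}.
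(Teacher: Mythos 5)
Your proposal is correct and takes essentially the same route as the paper, whose entire proof is that \eqref{MKdVExtSol} ``follows immediately'' from the expression \eqref{WPAlphaM} for $\alpha_{2N-2}$ together with the linearized flow \eqref{uInxt}, exactly as in your first paragraph. The additional direct verification you sketch is not needed for the theorem itself, but it is sound and reproduces the content of the paper's subsequent Remark, where substituting the solution into \eqref{Alpha2N2EvEq} yields the $\wp$-identity \eqref{MKdVWPEq}, reduced via \eqref{WP1112n1} and \eqref{WP1111} just as you propose.
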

This statement follows immediately from \eqref{WPAlphaM}, and \eqref{uInxt}.

\begin{rem}
Note that \eqref{MKdVExtSol} can be written in the form
\begin{equation}
\alpha(\sfx, \sft)  = - \tfrac{1}{2}  \partial_x \log \wp_{1,2N-1}
\big(\rmb (\sfx, \sft, \rmc_5,\dots, \rmc_{2N-1})^t + \bm{C} \big).
\end{equation}
\end{rem}

\begin{rem}
By substituting \eqref{MKdVExtSol}, \eqref{Beta2N5}, and \eqref{WPGamma} at $k=2$
into \eqref{Alpha2N2EvEq}, the following identity is obtained
\begin{multline}\label{MKdVWPEq}
 \frac{\wp_{1,1,3,2N-1}}{2 \wp_{1,2N-1}} - \frac{\wp_{1,1,2N-1}\wp_{1,3,2N-1}}{2 \wp_{1,2N-1}^2} \\
= \frac{\wp_{1,1} \wp_{1,1,2N-1}^2}{4 \wp_{1,2N-1}^2}
-  \frac{\wp_{1,1,1} \wp_{1,1,2N-1}^2}{2 \wp_{1,2N-1}}
- 2 \wp_{1,3} - \wp_{1,1}^2 - \lambda_2 \wp_{1,1} - \lambda_4,
\end{multline}
which serves as the dynamic equation on $\Jac(\mathcal{V}) \backslash \Sigma$ 
which represents the mKdV equation, if $\lambda_2=0$.
Indeed, using \eqref{WP1111} we eliminate $\lambda_4$ from \eqref{MKdVWPEq}. 
Then, with the help of \eqref{WP1112n1},
we express $\wp_{1,1}$, $\wp_{1,1,1}$, and $\wp_{1,1,1,1}$ in terms of
 $\partial_{u_1}^i \wp_{1,2N-1}$, $i=0$, $1$, \ldots, $4$,
 and come to the mKdV equation  for  $-\rmb \wp_{1,1,2N-1}(u)/\wp_{1,2N-1}(u)$.
 \end{rem}

%======================================
\section{Finite-gap solutions to the mKdV equation}\label{s:NgapMKdVSol}
We are looking for quasi-periodic solutions, which means we require $\alpha(\sfx, \sft)$
defined by \eqref{MKdVExtSol} to be real-valued and bounded.
Below, based on the results of \cite{BerKdV2024} and \cite{BerSinG2024}, 
\emph{reality conditions} are specified, 
including conditions on the spectral curve,
and on the constant vector $\bm{C}$.
The both cases of focusing, and defocusing mKdV are considered.

%-----------------------------------------------
\subsection{Reality conditions}\label{s:RealCond}
The reality conditions on hyperelliptic curves are closely related to the half-period lattice.
Let  $\mathfrak{J}^{\ReN} \,{=}\, \{\Omega \,{+}\, s \mid s\,{\in}\, \Real^g\}$  be affine subspaces
parallel to the real axes, where $\Omega$ runs over all  half-periods.
There exist $2^g$ non-congruent subspaces of this type.
Let  $\mathfrak{J}^{\ImN} \,{=}\, \{\Omega \,{+}\, \imath s \mid s\,{\in}\, \Real^g\}$  be affine subspaces
parallel to the imaginary axes, where $\Omega$ runs over all  half-periods.
There also exist $2^g$ non-congruent subspaces $\mathfrak{J}^{\ImN}$.

\begin{prop}\label{P:AllRealBPs}
Let all finite branch points of $\mathcal{V}$ be real, then the 
period lattice is rectangular, and  $\wp_{i,j}$ are real-valued on all subspaces $\mathfrak{J}^{\ReN}$, 
and $\mathfrak{J}^{\ImN}$;
$\wp_{i,j,k}$ are real-valued  on $\mathfrak{J}^{\ReN}$, and purely imaginary on $\mathfrak{J}^{\ImN}$.
\end{prop}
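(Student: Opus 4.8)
The plan is to exploit the anti-holomorphic involution on $\mathcal{V}$ induced by complex conjugation, and to read off the reality type of the period matrices and of the $\wp$-functions from it. Denote this involution $\iota(x,y) = (\bar x,\bar y)$; it is well defined because real branch points force every coefficient $\lambda_{2i+2}$ in \eqref{V22g1Eq} to be real, and the differentials \eqref{K1DifsGen}, \eqref{K2DifsGen} then satisfy the coefficientwise conjugation rule $\iota^\ast \rmd u = \overline{\rmd u}$, $\iota^\ast \rmd r = \overline{\rmd r}$, so that $\overline{\oint_c \rmd u} = \oint_{\iota_\ast c}\rmd u$ for any cycle $c$.

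First I would locate the real locus. On the finite real axis $y^2 = \prod_{i=1}^{2g+1}(x-e_i)$, and counting sign changes with $e_1 < \cdots < e_{2g+1}$ shows that $y$ is real on each cut $(e_{2k-1},e_{2k})$ and purely imaginary on each gap $(e_{2k},e_{2k+1})$. Evaluating the $\mathfrak{a}$-periods \eqref{FKD}, \eqref{SKD} along the cuts, where the integrands are real, gives $\omega,\eta \in \Real^{g\times g}$; deforming each $\mathfrak{b}_k$ to run along the gaps, where $y$ and hence the integrands are purely imaginary, gives $\omega',\eta' \in \imath\Real^{g\times g}$. Equivalently this is the homological statement $\iota_\ast \mathfrak{a}_k = \mathfrak{a}_k$, $\iota_\ast \mathfrak{b}_k = -\mathfrak{b}_k$ for Baker's basis, see fig.~\ref{cyclesOdd}a. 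Thus the lattice $\{\omega,\omega'\}$ is generated by real and by purely imaginary columns, i.e. it is rectangular, the Riemann matrix $\tau = \omega^{-1}\omega'$ is purely imaginary, and $\varkappa = \eta\omega^{-1}$ is real.

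With the lattice reality in hand I would pass to the sigma function through \eqref{SigmaThetaRel}. Since $\tau$ is purely imaginary, the theta function obeys $\overline{\theta(v;\tau)} = \theta(\bar v;\tau)$, and because $\varkappa$ is real and $[K]$ is a half-integer characteristic, \eqref{SigmaThetaRel} (with $C$ chosen real) yields a real structure $\overline{\sigma(u)} = \sigma(\bar u)$. Take a half-period $\Omega = \tfrac12\omega\varepsilon + \tfrac12\omega'\varepsilon'$. On $\mathfrak{J}^{\ReN}$ one has $u = \Omega + s$ with $\bar u - u = -\omega'\varepsilon'$ a lattice vector, whereas on $\mathfrak{J}^{\ImN}$ one has $u = \Omega + \imath s$ with $\bar u - (-u) = \omega\varepsilon$ a lattice vector; combining the real structure with the quasi-periodicity of $\sigma$ and its evenness $\sigma(-u)=\sigma(u)$, both cases give $\overline{\sigma(u)} = c(u)\,\sigma(u)$ with $c(u) = \exp(\text{affine in } u)$. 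Consequently $\ImN\log\sigma$ is affine in the real parameter $s$ along each subspace, so $\partial_{s_i}\partial_{s_j}\log\sigma$ and $\partial_{s_i}\partial_{s_j}\partial_{s_k}\log\sigma$ are real-valued there.

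The conclusion then follows by counting directional derivatives in \eqref{WPdef}. On $\mathfrak{J}^{\ReN}$, $\partial_{u_i} = \partial_{s_i}$, so $\wp_{i,j} = -\partial_{s_i}\partial_{s_j}\log\sigma$ and $\wp_{i,j,k} = -\partial_{s_i}\partial_{s_j}\partial_{s_k}\log\sigma$ are both real. On $\mathfrak{J}^{\ImN}$, $\partial_{u_i} = -\imath\,\partial_{s_i}$, whence two derivatives contribute $(-\imath)^2 = -1$ and $\wp_{i,j}$ stays real, while three derivatives contribute $(-\imath)^3 = \imath$ and $\wp_{i,j,k} = -\imath\,\partial_{s_i}\partial_{s_j}\partial_{s_k}\log\sigma$ becomes purely imaginary, exactly as asserted. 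The main obstacle I anticipate is the bookkeeping in the middle steps: pinning down the orientation of the $\mathfrak{b}_k$-cycles under $\iota$ for Baker's basis, and verifying that $c(u)$ is genuinely exponential-affine, so that $\ImN\log\sigma$ is affine rather than merely real modulo the lattice, uniformly over all $2^g$ half-periods $\Omega$, so that the reality statements hold simultaneously on every $\mathfrak{J}^{\ReN}$ and $\mathfrak{J}^{\ImN}$.
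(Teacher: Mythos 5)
Your argument is correct, but it is organized quite differently from the paper's: the paper's own proof of Proposition~\ref{P:AllRealBPs} is essentially a citation, deferring the reality of the period matrices to \cite[Proposition~1]{BerKdV2024} and the statements about $\wp_{i,j}$ and $\wp_{i,j,k}$ to \cite[Propositions~4 and 5]{BerKdV2024}, whereas you reconstruct the entire chain of reasoning self-containedly. Your ingredients match the mechanism the paper leans on: real $\lambda$'s give the anti-holomorphic involution; the sign count showing $y$ real on the cuts $(e_{2k-1},e_{2k})$ and purely imaginary on the gaps $(e_{2k},e_{2k+1})$ forces $\omega,\eta\in\Real^{g\times g}$ and $\omega'\in\imath\Real^{g\times g}$, in agreement with the explicit formulas $\omega_k=2\int_{e_{2k-1}}^{e_{2k}}\rmd u$ and $\omega'_k=2\sum_{i=k}^{g}\int_{e_{2i}}^{e_{2i+1}}\rmd u$ in the paper's computational section; and the lattice identities $\bar u-u=-\omega'\varepsilon'$ on $\mathfrak{J}^{\ReN}$ and $\bar u+u=\omega\varepsilon$ on $\mathfrak{J}^{\ImN}$ are exactly what drives the conclusion, uniformly over all $2^g$ subspaces of each type. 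Two simplifications would dissolve the obstacles you flag at the end. First, you do not need the step that $\ImN\log\sigma$ is affine in $s$, nor the factor $c(u)$: the functions $\wp_{i,j}$ and $\wp_{i,j,k}$ defined by \eqref{WPdef} are genuinely periodic with respect to the full lattice $\{\omega,\omega'\}$ (two or more logarithmic derivatives kill the quasi-period exponential) and have definite parity, $\wp_{i,j}$ even and $\wp_{i,j,k}$ odd; hence $\overline{\wp_{i,j}(u)}=\wp_{i,j}(\bar u)=\wp_{i,j}(u)$ on both families of subspaces, while $\overline{\wp_{i,j,k}(u)}=\wp_{i,j,k}(\bar u)$ equals $\wp_{i,j,k}(u)$ on $\mathfrak{J}^{\ReN}$ and $\wp_{i,j,k}(-u)=-\wp_{i,j,k}(u)$ on $\mathfrak{J}^{\ImN}$, which is the assertion. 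Second, the real structure $\overline{\sigma(u)}=\sigma(\bar u)$ is obtained more economically not from the theta representation \eqref{SigmaThetaRel}, where the constant $C$ and the characteristic $[K]$ require the bookkeeping you anticipate, but from the fact the paper itself invokes later (in the proof of Conjecture~\ref{C:CCBPs}): $\sigma$ is an entire series in $u$ whose coefficients are polynomials in the real parameters $\lambda$ of the curve, so conjugation commutes with it immediately. With these adjustments your proof is complete and, unlike the paper's, does not rest on external references.
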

\begin{proof}
According to  \cite[Proposition \,1]{BerKdV2024}, we have
 $\omega_k \,{\in}\,\Real^g$, and $\omega'_k \,{\in}\, \imath \Real^g$ if 
the homology basis is introduced as in Subsection\;\ref{ss:HypC}. 
Thus, each pair $\omega_k$, $\omega'_k$ generates a rectangular sublattice, and 
the whole period lattice is rectangular. 
As proven in \cite[Propositions\,4, and 5]{BerKdV2024}, 
$\wp_{i,j}$ are real-valued on all subspaces $\mathfrak{J}^{\ReN}$, and $\mathfrak{J}^{\ImN}$;
$\wp_{i,j,k}$ are real-valued  on $\mathfrak{J}^{\ReN}$, and purely imaginary on $\mathfrak{J}^{\ImN}$.
\end{proof}

Let some branch points  of $\mathcal{V}$ form complex conjugate pairs, then
the period lattice is composed of rectangular and rhombic sublattices.
Each complex conjugate pair brings to a rhombic sublattice.
The maximal number of complex conjugate pairs is $g$, then
all  sublattices are rhombic, see \cite[Theorem\,3]{BerSinG2024},  and we say that the whole period lattice is rhombic.

In the presence of complex conjugate pairs among finite branch points of $\mathcal{V}$,
$\wp_{i,j}$ are real-valued only on the two subspaces: 
$\mathfrak{J}_0^{\ReN} \,{=}\, \{s \mid s\,{\in}\, \Real^g\}$, and 
$\mathfrak{J}_0^{\ImN} \,{=}\, \{\imath s \mid s\,{\in}\, \Real^g\}$;
$\wp_{i,j,k}(u)$ are real-valued on $\mathfrak{J}_0^{\ReN}$, and
purely imaginary on $\mathfrak{J}_0^{\ImN}$.

Another thing which should be taken into consideration is singularities of $\wp$-functions. 
The singularities are
located at odd non-singular, and all singular half-periods, where the $\sigma$-function vanishes.
As proven in \cite[Propositions\,2, and 3]{BerKdV2024}, and \cite[Theorem\,4]{BerSinG2024},
in $\Jac(\mathcal{V})$ there exist only two affine subspaces free of zeros of the $\sigma$-function:
$\mathfrak{J}_K^{\ReN} \,{=}\, \{u[K] \,{+}\, s \mid s\,{\in}\, \Real^g\}$, and 
$\mathfrak{J}_K^{\ImN} \,{=}\, \{u[K] \,{+}\, \imath s \mid s\,{\in}\, \Real^g\}$.
This implies, that bounded solutions to the integrable hierarchy are obtained with $\bm{C} = u[K]$.

 \begin{theo}\label{T:Cshift}
The constant $\bm{C}$ from Theorem~\ref{T:MKdVESol} is defined by
\begin{equation}\label{CDef}
\bm{C} = \sum_{j=0}^{[(g-1)/2]} \tfrac{1}{2}  \omega_{g-2j}  + \sum_{k=1}^g \tfrac{1}{2}  \omega'_k
=  \mathcal{A} \Big(\sum_{i=1}^g e_{2i} \Big),
\end{equation}
and corresponds to the vector of Riemann constants $u[K]$. 
\end{theo}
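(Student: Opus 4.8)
The plan is to prove the two equalities separately: the right-hand identity $\mathcal{A}\big(\sum_{i=1}^g e_{2i}\big) = u[K]$ is the conceptual input, while the explicit half-period expression on the left is obtained by a direct computation of branch-point characteristics. Throughout, all identities are understood modulo the period lattice $\{\omega,\omega'\}$.

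First I would recall that in the homology basis of Subsection~\ref{ss:HypC}, with the basepoint taken at the Weierstrass point $e_0 = \infty$, the Abel image of every finite branch point is a half-period, $\mathcal{A}(e_m) = \tfrac{1}{2}\omega\varepsilon_m + \tfrac{1}{2}\omega'\varepsilon'_m$ with $\varepsilon_m,\varepsilon'_m \in \{0,1\}^g$. These characteristics are fixed by the cut system $(e_{2k-1},e_{2k})$ and were recorded, following \cite{bakerAF}, in \cite{BerKdV2024} and \cite{BerSinG2024}: the image of the monodromy path from $\infty$ to $e_m$ crosses a prescribed set of canonical cycles, and reading off those crossings yields $\varepsilon_m,\varepsilon'_m$.

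For the right-hand equality I would invoke the classical description of the vector of Riemann constants on a hyperelliptic curve. Since $\infty$ is a Weierstrass point, $u[K]$ is a half-period (the doubled Riemann vector being the Abel image of the canonical class, which is a multiple of $\infty$ and hence trivial); Riemann's vanishing theorem then pins it down as the Abel image of a degree-$g$ divisor supported on branch points, and for Baker's basis this divisor is $e_2 + e_4 + \cdots + e_{2g}$, see \cite[Art.\,200--202]{bakerAF}. Equivalently, the characteristic $[K]$ is the mod-$2$ sum of the even-indexed branch-point characteristics. This establishes $\bm{C} = \mathcal{A}\big(\sum_{i=1}^g e_{2i}\big) = u[K]$ and, together with the discussion preceding the theorem, shows that $\bm{C}$ lands in the zero-free subspaces $\mathfrak{J}_K^{\ReN}$, $\mathfrak{J}_K^{\ImN}$.

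It then remains to evaluate $\sum_{i=1}^g \mathcal{A}(e_{2i})$ explicitly. Substituting the characteristics from the first step and summing, the $\mathfrak{b}$-period contributions of the even branch points collectively produce $\tfrac{1}{2}\sum_{k=1}^g \omega'_k$, every $\mathfrak{b}$-period appearing with odd multiplicity, while the $\mathfrak{a}$-period contributions reduce modulo the lattice: a period $\omega_k$ survives with coefficient $\tfrac12$ exactly when it is contained in an odd number of the $e_{2i}$ characteristics, so the even multiplicities cancel in pairs and only every second $\mathfrak{a}$-period remains, giving $\tfrac{1}{2}\sum_{j=0}^{[(g-1)/2]}\omega_{g-2j}$. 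The parity of $g$ determines where the surviving chain terminates, which is the source of the floor $[(g-1)/2]$ in the index range. The main obstacle is precisely this last bookkeeping: tracking each half-integer characteristic through the sum and carrying out the mod-$2$ reduction so that the alternating pattern of surviving $\mathfrak{a}$-periods, together with its parity-dependent cutoff, matches the claimed formula. I would fix conventions by checking $g=1,2,3$ explicitly and then carry out the general reduction.
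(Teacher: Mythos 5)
Your proposal is correct and follows essentially the same route as the paper, which states Theorem~\ref{T:Cshift} without an inline proof, relying on the preceding discussion (boundedness forces $\bm{C}=u[K]$, via the zero-free subspaces $\mathfrak{J}_K^{\ReN}$, $\mathfrak{J}_K^{\ImN}$ established in \cite{BerKdV2024,BerSinG2024}) together with the classical identification, in Baker's homology basis, of $u[K]$ with $\mathcal{A}\big(\sum_{i=1}^g e_{2i}\big)$. Your characteristic bookkeeping also checks out: with the cut system of Subsection~\ref{ss:HypC} one has $\mathcal{A}(e_{2k}) \equiv \tfrac12\omega'_k + \tfrac12\sum_{i=1}^{k}\omega_i$ modulo the lattice, so each $\omega'_k$ enters once while $\omega_i$ enters with multiplicity $g-i+1$, which is odd exactly when $i \equiv g \pmod 2$, yielding the surviving periods $\omega_{g-2j}$, $j=0,\dots,[(g-1)/2]$, as in \eqref{CDef}.
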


%-----------------------------------------------
\subsection{Defocusing mKdV hierarchy}
This hierarchy arises in the cases 
\begin{itemize}
\item  $\mathfrak{g} = \mathfrak{sl}(2,\Real)$, then 
\begin{equation}\label{SL2Cond}
\alpha_{2m}, \ \beta_{2m -1},\  \gamma_{2m - 1} \in  \Real,\  \
m=0,\, \dots,\, N-1,\quad \rmb \in \Real;
\end{equation}

\item  $\mathfrak{g} = \imath \mathfrak{sl}(2,\Real)$, then
\begin{equation}\label{iSL2Cond}
\alpha_{2m}\in \Real, \quad \beta_{2m -1},\  \gamma_{2m - 1} \in  \imath \Real,\  \
m=0,\, \dots,\, N-1,\quad \rmb \in \imath \Real.
\end{equation}
\end{itemize}
Taking into account Proposition\;\ref{P:AllRealBPs}, we see that all expressions given by
\eqref{WPAllM} comply with \eqref{SL2Cond} on $u\in \mathfrak{J}^{\ReN}$,
and with \eqref{iSL2Cond} on $u\in \mathfrak{J}^{\ImN}$, if and only if all
finite branch points of $\mathcal{V}$ are real.

\begin{theo}
The defocusing mKdV equation \eqref{MKdVEq}, $\varsigma=1$, 
arises on hyperelliptic curves which possess a branch point
at infinity, a branch point at the origin, and all other branch points are real.  
The real-valued, and bounded finite-gap solution
is given by $w(\rmx,\rmt) = \alpha(\rmx,\rmt)$, where 
\begin{equation}\label{MKdVSol}
 \alpha(\rmx,\rmt)  = - \frac{\rmb 
\wp_{1,1,2N-1} \big( \rmb (\rmx + 2 \rmr_{2N-2} \rmt,-4 \rmb^2 \rmt, \rmc_5,\dots, \rmc_{2N-1})^t + u[K]\big)}
{2 \wp_{1,2N-1}\big( \rmb (\rmx + 2 \rmr_{2N-2} \rmt,-4 \rmb^2 \rmt, \rmc_5,\dots, \rmc_{2N-1})^t + u[K]\big)},
\end{equation}
$\rmb \in \Real$, or $\rmb \in \imath \Real$.
\end{theo}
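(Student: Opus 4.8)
The plan is to assemble the statement from three ingredients already in hand: the closed form for $\alpha_{2N-2}$ in terms of the basis $\wp$-functions, the parametrization of the Jacobian argument by the flow times, and the reality and boundedness results collected in Subsection~\ref{s:RealCond}. I would start from Theorem~\ref{T:MKdVESol}, which gives the extended mKdV solution $\alpha(\sfx,\sft)$ in \eqref{MKdVExtSol} with Jacobian argument $u = \rmb(\sfx,\sft,\rmc_5,\dots,\rmc_{2N-1})^t + \bm{C}$. Substituting the change of variables \eqref{xtNatural} and invoking the identification $w(\rmx,\rmt) = \alpha(\rmx + 2\rmr_{2N-2}\rmt,-4\rmb^2\rmt)$ of Subsection~\ref{ss:SineSinhParam}, which by construction satisfies the defocusing mKdV equation when $\mathfrak{g} = \mathfrak{sl}(2,\Real)$ or $\imath\mathfrak{sl}(2,\Real)$, together with the choice $\bm{C} = u[K]$ dictated by Theorem~\ref{T:Cshift}, yields the formula \eqref{MKdVSol} verbatim. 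That the admissible curves are precisely those with all finite branch points real is the \emph{if and only if} assertion stated just before the theorem; the further branch points at infinity (the basepoint $e_0$) and at the origin (forced by the factor $z$ in \eqref{CurveGN}, i.e.\ $\lambda_{4g+2}=0$) are structural features of \eqref{CurveGN}.

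The heart of the argument is reality, which I would handle case by case. When $\rmb\in\Real$ the vector $\rmb(\rmx + 2\rmr_{2N-2}\rmt,-4\rmb^2\rmt,\rmc_5,\dots,\rmc_{2N-1})^t$ lies in $\Real^g$, so the argument of the $\wp$-functions traverses $\mathfrak{J}_K^{\ReN}$, a subspace of type $\mathfrak{J}^{\ReN}$; by Proposition~\ref{P:AllRealBPs} both $\wp_{1,2N-1}$ and $\wp_{1,1,2N-1}$ are real there, whence $\alpha$ is real. When $\rmb\in\imath\Real$ the same vector lies in $\imath\Real^g$, so the argument traverses $\mathfrak{J}_K^{\ImN}$, of type $\mathfrak{J}^{\ImN}$; there $\wp_{1,2N-1}$ remains real while $\wp_{1,1,2N-1}$ becomes purely imaginary, so that the product $\rmb\,\wp_{1,1,2N-1}$ is again real and $\alpha$ is real. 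This dual mechanism --- reality produced either by real $\wp$-functions or by the cancellation of two purely imaginary factors --- is the single point that demands care, and it is precisely why one formula serves both real forms.

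Finally, for boundedness I would use that the denominator $\wp_{1,2N-1}$ has poles only on the theta divisor $\Sigma$, where $\sigma$ vanishes. By Theorem~\ref{T:Cshift} and the discussion of Subsection~\ref{s:RealCond}, the choice $\bm{C} = u[K]$ confines the entire trajectory to $\mathfrak{J}_K^{\ReN}$ (resp.\ $\mathfrak{J}_K^{\ImN}$), which are the only affine subspaces parallel to the real (resp.\ imaginary) axes that are free of zeros of $\sigma$. Hence $\wp_{1,2N-1}$ never vanishes along the trajectory, the quotient in \eqref{MKdVSol} stays finite, and $w$ is a bounded, real-valued function of $\rmx$ and $\rmt$, which completes the proof.
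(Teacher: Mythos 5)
Your proposal is correct and follows the paper's own proof exactly: the paper derives this theorem in one line from Theorem~\ref{T:MKdVESol}, the change of variables \eqref{xtNatural}, Proposition~\ref{P:AllRealBPs}, and Theorem~\ref{T:Cshift}, and your case analysis for $\rmb \in \Real$ versus $\rmb \in \imath\Real$ is precisely the discussion of \eqref{SL2Cond} and \eqref{iSL2Cond} given just before the theorem. One small wording slip: confinement to $\mathfrak{J}_K^{\ReN}$ or $\mathfrak{J}_K^{\ImN}$ guarantees that $\sigma$ does not vanish, hence that $\wp_{1,2N-1}$ has no \emph{poles} along the trajectory --- not that it never vanishes --- which is the same level of boundedness argument the paper itself employs.
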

\begin{proof}
The statement follows from Theorem~\ref{T:MKdVESol} after applying the transformation
\eqref{xtNatural}, and taking into account Proposition\;\ref{P:AllRealBPs},
and Theorem\;\ref{T:Cshift}.
\end{proof}

%-----------------------------------------------
\subsection{Focusing mKdV hierarchy}
The hierarchy arises in the cases
\begin{itemize}
\item  $\mathfrak{g} = \mathfrak{su}(2)$, then
\begin{equation}\label{SU2Cond}
\alpha_{2m} \in \imath \Real,\quad 
\beta_{2m -1} =- \bar{\gamma}_{2m - 1},  \  m=0,\, \ldots,\, N-1,\quad \rmb \in \imath \Real;
\end{equation}

\item  $\mathfrak{g} = \imath \mathfrak{su}(2)$, then
\begin{equation}\label{iSU2Cond}
\alpha_{2m} \in \imath \Real,\quad  
\beta_{2m -1} =\bar{\gamma}_{2m - 1}, \ m=0,\, \ldots,\, N-1, \quad \rmb \in \Real.
\end{equation}
\end{itemize}

\begin{conj}\label{C:CCBPs}
All expressions given by
\eqref{WPAllM} comply with \eqref{SU2Cond} on $u\in \mathfrak{J}^{\ImN}_K$,
and with \eqref{iSU2Cond} on $u\in \mathfrak{J}^{\ReN}_K$, if and only if 
branch points of $\mathcal{V}$ are $0$, $\infty$, and $g$ complex conjugate pairs.
\end{conj}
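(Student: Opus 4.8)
The plan is to read the reality conditions \eqref{SU2Cond} and \eqref{iSU2Cond} as an \emph{anti-Hermiticity} property of the Lax matrix $\Lmatr$, translate this into an anti-holomorphic involution on the spectral curve \eqref{CurveGN}, and then transport that involution to $\Jac(\mathcal{V})$, where its real fixed locus should turn out to be exactly $\mathfrak{J}^{\ImN}_K$ (for $\mathfrak{su}(2)$) or $\mathfrak{J}^{\ReN}_K$ (for $\imath\mathfrak{su}(2)$). First I would observe that in both focusing cases the invariant $H(z)=\tfrac12\tr\Lmatr^2=\alpha(z)^2+\beta(z)\gamma(z)$ of \eqref{InvF} has real coefficients: writing $p^\ast(z):=\overline{p(\bar z)}$, the $\mathfrak{su}(2)$ conditions give $\beta=-\gamma^\ast$ and $\alpha^\ast=-\alpha$, whence $\overline{H(\bar z)}=H(z)$, and likewise for $\imath\mathfrak{su}(2)$. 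Thus the curve carries the real structure $\tau:(z,w)\mapsto(\bar z,\bar w)$. Evaluating $H$ along the real axis then shows $\alpha(z)^2\le 0$ and $\gamma(z)\gamma^\ast(z)=|\gamma(z)|^2\ge 0$ for $z\in\Real$, so that $H(z)\le 0$ there; a real finite branch point other than the origin would be a simple zero of the reduced polynomial, across which $H$ changes sign, contradicting $H\le 0$. Hence the finite branch points other than the origin are forced off the real axis into complex conjugate pairs, the branch point at $z=0$ being built into the family \eqref{CurveGN} and $\infty$ serving as basepoint. This already delivers the necessity (``only if'') part and singles out the configuration $\{0,\infty,\text{$g$ c.c.\ pairs}\}$.

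For sufficiency I would exploit that this is precisely the configuration treated for the sine(sinh)-Gordon hierarchy, so the reality results of \cite{BerSinG2024} apply verbatim: the period lattice is rhombic, the only singularity-free subspaces are $\mathfrak{J}^{\ReN}_K$ and $\mathfrak{J}^{\ImN}_K$, and on the \emph{unshifted} subspaces $\mathfrak{J}^{\ReN}_0,\mathfrak{J}^{\ImN}_0$ the $\wp_{i,j}$ are real while $\wp_{i,j,k}$ are real on $\mathfrak{J}^{\ReN}_0$ and purely imaginary on $\mathfrak{J}^{\ImN}_0$. The solution argument is $u=\rmb(\sfx,\sft,\rmc_5,\dots,\rmc_{2N-1})^t+u[K]$ with $u[K]$ the half-period \eqref{CDef}; since $\rmb\in\imath\Real$ in the $\mathfrak{su}(2)$ case and $\rmb\in\Real$ in the $\imath\mathfrak{su}(2)$ case, $u$ ranges over $\mathfrak{J}^{\ImN}_K$, respectively $\mathfrak{J}^{\ReN}_K$ — exactly the singularity-free subspaces, which secures boundedness. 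It then remains to check that on these subspaces the expressions \eqref{WPAllM} satisfy the focusing reality, and this unwinds to the two $\wp$-identities $\wp_{1,2i-1,2N-1}(u)/\wp_{1,2N-1}(u)\in\Real$ (resp.\ $\in\imath\Real$) and $\wp_{2i+1,2N-1}(u)/\wp_{1,2N-1}(u)=-\overline{\wp_{1,2i-1}(u)}$, the latter encoding $\beta_{2m-1}=\mp\bar\gamma_{2m-1}$ in both cases.

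The crucial and delicate point — and the reason the statement is posed as a conjecture — is the transfer from the \emph{unshifted} reality of \cite{BerSinG2024} to the \emph{$K$-shifted} subspaces, compounded by the fact that the required relation mixes $\wp$-functions of different index sets. Indeed, on $\mathfrak{J}^{\ImN}_0$ the ratio $\wp_{1,2i-1,2N-1}/\wp_{1,2N-1}$ is purely imaginary, whereas on $\mathfrak{J}^{\ImN}_K$ it must be real, so the half-period shift by $u[K]$ itself alters the reality type and cannot be ignored. I would handle this by composing the induced anti-holomorphic involution $\hat\tau$ on $\Jac(\mathcal{V})$ with the translation by $u[K]$ and showing the resulting involution has $\mathfrak{J}^{\ImN}_K$ (resp.\ $\mathfrak{J}^{\ReN}_K$) as its real fixed locus; the half-period translation formulas for the multiply periodic $\wp$-functions, expressing $\wp(u+u[K])$ through the branch points and $\sigma$-quotients, should then convert the plain reality on $\mathfrak{J}_0$ into the index-swapped conjugation relation on $\mathfrak{J}_K$. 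The main obstacle is exactly this index swap $\{1,2i-1\}\leftrightarrow\{2i+1,2N-1\}$: unlike the scalar reality statements of Proposition~\ref{P:AllRealBPs} and \cite{BerSinG2024}, it requires tracking how $\hat\tau$ acts on the second kind differentials and on the index structure of the $\wp$-functions, not merely on the period lattice. Equivalently, one may argue at the level of divisors — showing that $u\in\mathfrak{J}^{\ImN}_K$ is $\hat\tau$-fixed iff the separation divisor of \eqref{PointsDef1} is $\tau$-invariant, and that $\tau$-invariance of the divisor is equivalent to anti-Hermiticity of the reconstructed $\Lmatr$ — but verifying that this divisor symmetry reproduces \eqref{WPAllM} with precisely the required conjugation relations is where the real work lies.
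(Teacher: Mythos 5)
Your necessity argument contains a concrete error. From \eqref{MKdVPhSp}, $\beta(z)=\sum_{m=0}^N\beta_{2m-1}z^{m-1}$ while $\gamma(z)=\sum_{m=0}^N\gamma_{2m-1}z^{m}$, so the condition $\beta_{2m-1}=-\bar\gamma_{2m-1}$ of \eqref{SU2Cond} gives $\beta(z)=-z^{-1}\overline{\gamma(\bar z)}$, \emph{not} $\beta=-\gamma^\ast$; the degree offset you dropped is exactly the source of the $h_{-1}z^{-1}$ term in \eqref{InvF}. Consequently, for real $z$ one gets $H(z)=\alpha(z)^2-|\gamma(z)|^2/z$, which is $\leqslant 0$ only on the half-axis $z>0$, so your sign-change argument excludes only \emph{positive} real branch points; in the $\imath\mathfrak{su}(2)$ case \eqref{iSU2Cond} one finds $H(z)=\alpha(z)^2+|\gamma(z)|^2/z\leqslant 0$ only for $z<0$, excluding only negative ones. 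Your asserted ``$H(z)\le 0$ there'' on all of $\Real$ is false, and each case alone leaves half of the real axis available for real branch points. (Since the conjecture conjoins the two cases over the same canonical curve $\mathcal{V}$, with real $\lambda$'s and the same $e_i$, the two half-axis exclusions \emph{combined} would repair necessity --- but you would have to say this, and your text argues within a single case.) The paper's necessity mechanism is different and sharper: it evaluates \eqref{SU2GammaN1}, \eqref{iSU2GammaN1} at the half-period $\bm{K}=\Omega_{2,4,\dots,2g}$, whose Abel pre-image is $\{e_{2i}\}_{i=1}^g$, and observes that the resulting equality $\lambda_{4g}=\prod_{i=1}^g e_{2i}\bar e_{2i}$ in \eqref{Lambda4g} forces the conjugate-pair configuration with the remaining finite branch point at zero.

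For sufficiency you give a program, not a proof. The step you yourself flag as ``where the real work lies'' --- converting the unshifted reality of \cite{BerSinG2024} on $\mathfrak{J}_0^{\ReN}$, $\mathfrak{J}_0^{\ImN}$ into the $K$-shifted, index-swapped conjugation relations $\wp_{2i+1,2N-1}/\wp_{1,2N-1}=-\overline{\wp_{1,2i-1}}$ --- is left entirely undone, and it is precisely the content of the statement. The paper handles it concretely: it reduces \eqref{SU2Cond}, \eqref{iSU2Cond} via \eqref{WPAllM} to the explicit identities \eqref{SU2GammaAlpha}, \eqref{iSU2GammaAlpha} (using $\overline{\wp_{i,j,\dots}(u)}=\wp_{i,j,\dots}(\bar u)$, valid because the $\lambda$'s are real), then verifies them by the explicit addition law in genera $1$ and $2$ in Appendix~\ref{A:RCAddLaw} --- where, e.g., $\wp(s+\bm{K})\wp(s+\bar{\bm{K}})$ collapses to $e_2\bar e_2$ if and only if $e_1=0$ --- and only numerically in higher genus, which is exactly why the statement is posed as a conjecture. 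Note also that your plan must contend with an obstruction the paper records: the triple-index conditions \eqref{SU2Alpha}, \eqref{iSU2Alpha} cannot be tested at half-periods, since $\wp_{i,j,k}$ vanish there, so the half-period translation formulas alone will not close the argument. Your involution-plus-translation idea is a genuinely different, more conceptual route and, if executed (including the action of the twisted involution on the index structure and the second-kind data), could upgrade the conjecture to a theorem in all genera; as submitted, however, it neither proves the ``if'' direction nor correctly establishes the ``only if'' direction.
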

\begin{proof}
The proof uses uniformization of the curve in question, given by \eqref{WPAllM}.

Case $\mathfrak{g} \,{=}\, \mathfrak{su}(2)$. Then  $\rmb \,{=}\, \imath b$, $b \,{\in}\,\Real$,
and $u$ has the form
$u \,{=}\, \imath s \,{+}\, u[K]$, $s \,{\in}\, \Real^g$, since $u\,{\in}\, \mathfrak{J}^{\ImN}_K$. 
For brevity we denote $u[K]$ by $\bm{K}$.
Substituting \eqref{WPAllM} into \eqref{SU2Cond},
we find the required identities
\begin{subequations}\label{SU2GammaAlpha}
\begin{align}
&\wp_{1,2N-1}(\imath s \,{+}\, \bm{K}) \wp_{1,2N-1}(-\imath s \,{+}\, \bar{\bm{K}}) = - \rmh_{-1}/ b^2 = \lambda_{4g},
\label{SU2GammaN1}\\
&\frac{\wp_{2i+1,2N-1}(\imath s \,{+}\, \bm{K})}{\wp_{1,2N-1}(\imath s \,{+}\, \bm{K})}
= - \wp_{1,2i-1}({-}\imath s + \bar{\bm{K}}),\ \
i=1,\,\dots,\,N-1, \label{SU2Gamma}\\
&\ImN \frac{ \wp_{1,2i-1,2N-1}(\imath s \,{+}\, \bm{K})}{\wp_{1,2N-1}(\imath s \,{+}\, \bm{K})} = 0,\ \ 
i=1,\,\dots,\,N. \label{SU2Alpha}
\end{align}
\end{subequations}
Here we use the fact that $\overline{\wp_{i,j,\dots}(u)} = \wp_{i,j,\dots}(\bar{u})$,
since  $\wp$ are meromorphic functions generated by \eqref{WPdef} from the $\sigma$-function, which is
an analytic series in $u$ and real parameters $\lambda$ of the curve in question.

Case $\mathfrak{g} \,{=}\, \imath \mathfrak{su}(2)$. Then  $\rmb \,{=}\, b \,{\in}\,\Real$,
and $u$ has the form
$u \,{=}\, s \,{+}\, \bm{K}$, $s \,{\in}\, \Real^g$, since $u\in \mathfrak{J}^{\ReN}_K$. 
Substituting \eqref{WPAllM} into \eqref{iSU2Cond},
we find the required identities in this case
\begin{subequations}\label{iSU2GammaAlpha}
\begin{align}
&\wp_{1,2N-1}(s \,{+}\, \bm{K}) \wp_{1,2N-1}(s \,{+}\, \bar{\bm{K}}) =  \rmh_{-1}/ b^2 = \lambda_{4g},
\label{iSU2GammaN1} \\
&\frac{\wp_{2i+1,2N-1}(s \,{+}\, \bm{K})}{\wp_{1,2N-1}(s \,{+}\, \bm{K})} 
= - \wp_{1,2i-1}(s + \bar{\bm{K}}),\ \
i=1,\,\dots,\,N-1, \label{iSU2Gamma}\\
&\ReN \frac{ \wp_{1,2i-1,2N-1}(s \,{+}\, \bm{K})}{\wp_{1,2N-1}(s \,{+}\, \bm{K})} = 0,\ \ 
i=1,\,\dots,\,N.  \label{iSU2Alpha}
\end{align}
\end{subequations}

Recall, that the Abel pre-image of $\bm{K}$ is composed of $\{e_{2i}\}_{i=1}^N$,
and correspondingly, the Abel pre-image of $\bar{\bm{K}}$ of $\{\bar{e}_{2i}\}_{i=1}^N$. 
It is natural to expect, that the identities \eqref{SU2GammaAlpha}, and \eqref{iSU2GammaAlpha}
hold if $\{\bar{e}_{2i}\}_{i=1}^N$ are among the branch points of $\mathcal{V}$. This means, 
$2g$ branch points form $g$ complex conjugate pairs.

Indeed, there exist $2^g$ even non-singular half-periods,
all located on $\mathfrak{J}^{\ReN}_K$, see  \cite[Theorem\,4]{BerSinG2024}.
In the case of $g$ pairs of complex conjugate branch points, Abel pre-images of these  
$2^g$ half-periods are composed by choosing one point from each pair, see \cite[Proposition\,2]{BerSinG2024}.
Let the half-period $\Omega_{I}$ correspond to the non-special divisor compose of  branch points with 
indices from the set  $I$, $\card I = g$. Thus, $\bm{K} = \Omega_{2,4,\dots,2g}$.
Evidently, \eqref{SU2GammaN1}, and \eqref{iSU2GammaN1} work 
if and only if $\{\bar{e}_{2i}\}_{i=1}^N$ serve as the branch points of $\mathcal{V}$,
and the remaining finite branch point is zero.  
Otherwise, 
\begin{equation}\label{Lambda4g}
\lambda_{4g} = \prod_{i=1}^N e_{2i} \bar{e}_{2i}
\end{equation} 
does not hold. The equality \eqref{Lambda4g} holds if any branch points in $\mathcal{A}^{-1}(\bm{K})$
are replaced by their complex conjugate counterparts. 

The identities \eqref{SU2Gamma}, and \eqref{iSU2Gamma} contain the functions $\wp_{2i+1,2N-1}$,
which are computed from the fundamental relations associated with $\mathcal{V}$,
and expressed through $\wp_{1,2i+1}$, $\wp_{1,1,2i+1}$, $i=1$, \ldots, $g$.

The identities  \eqref{SU2Alpha}, \eqref{iSU2Alpha} can not be verified in this way, since $\wp_{i,j,k}$, and so
the left hand sides, vanish on half-periods.

The same even non-singular half-periods can be arranged on $\mathfrak{J}^{\ImN}_K$.

\smallskip
We prove the identities \eqref{SU2GammaAlpha}, and \eqref{iSU2GammaAlpha}
 in genera $1$, and $2$, using the addition law, see Appendix\;\ref{A:RCAddLaw}. 
 In higher genera,
 the same identities hold in numerical computations for hyperelliptic curves with 
different combinations of real and complex conjugate branch points.
\end{proof}

Note, that only the curves from Conjecture~\ref{C:CCBPs} serve as spectral within the sine-Gordon hierarchy,
 see \cite[Theorem\,5]{BerSinG2024}

\begin{theo}\label{T:MKdVfoc}
The focusing mKdV equation \eqref{MKdVEq}, $\varsigma=-1$, arises on hyperelliptic curves which 
possess a branch point
at infinity, a branch point at the origin, and all other branch points form complex conjugate pairs.  
The real-valued, and bounded finite-gap solution
is given by $w(\rmx,\rmt) \,{=}\, {-} \imath \alpha(\rmx,\rmt)$,
where $\alpha(\rmx,\rmt)$ is defined by \eqref{MKdVSol},  $\rmb \in \Real$, or $\rmb \in \imath \Real$.
\end{theo}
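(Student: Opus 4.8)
The plan is to mirror the proof of the defocusing theorem, with Conjecture~\ref{C:CCBPs} replacing Proposition~\ref{P:AllRealBPs} and the assignment $w=-\imath\alpha$ replacing $w=\alpha$. First I would invoke Theorem~\ref{T:MKdVESol}, which furnishes the $N$-gap solution $\alpha(\sfx,\sft)$ of the extended equation \eqref{MKdVEqPre} as the ratio of basis $\wp$-functions evaluated at $\rmb(\sfx,\sft,\rmc_5,\dots,\rmc_{2N-1})^t+\bm{C}$. Applying the change of variables \eqref{xtNatural} turns \eqref{MKdVEqPre} into the standard mKdV equation \eqref{MKdVEq} and yields exactly the expression \eqref{MKdVSol} for $\alpha(\rmx,\rmt)$. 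Since the focusing hierarchy corresponds to $\mathfrak{g}=\mathfrak{su}(2)$ or $\mathfrak{g}=\imath\mathfrak{su}(2)$, Subsection~\ref{ss:SineSinhParam} already identifies $w=-\imath\alpha$ as the unknown satisfying \eqref{MKdVEq} with $\varsigma=-1$; what remains is to verify that this $w$ is genuinely real-valued and bounded on the asserted curves.

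For boundedness and reality I would fix $\bm{C}=u[K]$ as prescribed by Theorem~\ref{T:Cshift}. Because the constants $\rmc_{2i-1}$ and the variables $\sfx$, $\sft$ are real, the argument $u=\rmb(\sfx,\sft,\rmc_5,\dots,\rmc_{2N-1})^t+u[K]$ then traces the subspace $\mathfrak{J}^{\ImN}_K$ when $\rmb\in\imath\Real$ (the $\mathfrak{su}(2)$ case) and the subspace $\mathfrak{J}^{\ReN}_K$ when $\rmb\in\Real$ (the $\imath\mathfrak{su}(2)$ case). Boundedness is then immediate, since $\mathfrak{J}^{\ReN}_K$ and $\mathfrak{J}^{\ImN}_K$ are precisely the two affine subspaces of $\Jac(\mathcal{V})$ avoiding the zeros of the $\sigma$-function, so the denominator $\wp_{1,2N-1}(u)$ of \eqref{MKdVSol} never vanishes there. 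Reality of $w=-\imath\alpha$ reduces to showing that the dynamic variable $\alpha_{2N-2}$ of \eqref{WPAlphaM} lies in $\imath\Real$ on these subspaces — equivalently, that $\wp_{1,1,2N-1}(u)/\wp_{1,2N-1}(u)$ is real on $\mathfrak{J}^{\ImN}_K$ and purely imaginary on $\mathfrak{J}^{\ReN}_K$.

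This last statement, together with the companion compatibility of the remaining variables $\gamma_{2m-1}$ and $\beta_{2m-1}$, is exactly the content of the identities \eqref{SU2GammaAlpha} and \eqref{iSU2GammaAlpha} established under Conjecture~\ref{C:CCBPs}: the full set of expressions \eqref{WPAllM} is consistent with the $\mathfrak{su}(2)$ constraints \eqref{SU2Cond} on $\mathfrak{J}^{\ImN}_K$, and with the $\imath\mathfrak{su}(2)$ constraints \eqref{iSU2Cond} on $\mathfrak{J}^{\ReN}_K$, if and only if the finite branch points of $\mathcal{V}$ are the origin together with $g$ complex conjugate pairs. Thus both the characterization of admissible spectral curves and the reality of $w$ follow at once from the conjecture. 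The main obstacle is therefore not the algebraic integration, which is already complete, but the reality condition itself: I expect the genuine difficulty to sit in the identities \eqref{SU2Alpha} and \eqref{iSU2Alpha}, whose left-hand sides involve $\wp_{i,j,k}$ and hence vanish on the half-periods, so they cannot be checked by the fundamental relations used for \eqref{SU2Gamma} and \eqref{iSU2Gamma}. I would therefore flag explicitly that, unlike the defocusing theorem resting on the fully proven Proposition~\ref{P:AllRealBPs}, this focusing theorem is conditional on Conjecture~\ref{C:CCBPs}, which is established only in genera $1$ and $2$ and verified numerically beyond.
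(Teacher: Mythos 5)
Your proposal is correct and follows essentially the same route as the paper, whose own proof is a one-line citation of exactly the three ingredients you assemble: Theorem~\ref{T:MKdVESol} with the change of variables \eqref{xtNatural}, Theorem~\ref{T:Cshift} fixing $\bm{C}=u[K]$, and Conjecture~\ref{C:CCBPs} supplying the reality conditions on $\mathfrak{J}^{\ImN}_K$ and $\mathfrak{J}^{\ReN}_K$. Your explicit unpacking of boundedness via the absence of zeros of the $\sigma$-function on these two subspaces, and your caveat that the result is conditional on Conjecture~\ref{C:CCBPs} (proved only in genera $1$ and $2$, verified numerically beyond), are both faithful to the paper.
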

\begin{proof}
The statement follows from Theorem~\ref{T:MKdVESol}, Conjecture~\ref{C:CCBPs},
and Theorem\;\ref{T:Cshift}.
\end{proof}

%-----------------------------------------------
\subsection{Initial conditions}

Every finite-gap solution is represented by a trajectory of a hamiltonian system from the
 hierarchy in question.
An $N$-gap hamiltonian system is fixed by a choice of $\rmr_{-1}$, \ldots, $\rmr_{N-2}$, which fix
an orbit $\mathcal{O}$, and so the phase space.
A trajectory in the phase space is fixed by the values $\rmh_{N-1}$, \ldots, $\rmh_{2N-2}$
of hamiltonians, which serve as initial conditions.

\begin{rem}
In terms of  variables of separation $\{(z_k,w_k)\}_{k=1}^N$, 
an $N$-gap system splits into $N$ one-dimensional systems.
Each system has coordinate $z_k$ and momentum $w_k$, and is governed by 
the hamiltonian
\begin{gather}
\begin{split}
&\mathcal{H}(z_k,w_k) = w_k^2 + \mathcal{U}(z_k),\\
&\mathcal{U}(z_k) = - \big(\rmb^2 z_k^{2N+1} + \rmr_{2N-2} z_k^{2N} + \cdots + \rmr_{N-1} z_k^{N+1} \\
&\phantom{mmmmmmmmmmmmm}  + \rmh_{N-2} z_k^{N} + \cdots + \rmh_{-1} z_k \big).
\end{split}
\end{gather}
Such a system describes the motion of a particle of mass $1/2$ in the potential $\mathcal{U}$.
\end{rem}

With fixed $\rmr_{-1}$, \ldots, $\rmr_{N-2}$, $\rmh_{N-1}$, \ldots, $\rmh_{2N-2}$,
there exists only one trajectory such that $\mathcal{H}(z_k,w_k) = 0$,
which serves as the $N$-gap solution of the extended mKdV equation on a chosen orbit
under chosen initial conditions. This implies that the branch points $(e_i,0)$, $e_i \in \Real$,
serve as turn-points of the trajectory.

%======================================
%-----------------------------------------------
\section{New solutions of the KdV equation}
By applying the Miura transformation  to \eqref{MKdVSol}  we find
\begin{theo} Let
\begin{equation}\label{NewKdVSol}
\gamma(u) =
\rmb^2 \bigg( \lambda_2  - 2 \frac{\wp_{3,2g-1}(u)}{\wp_{1,2g-1}(u)} \bigg)
\end{equation}
Then $\upsilon(\rmx, \rmt) = 
\gamma\big( \rmb (\rmx + 2 \rmr_{2N-2} \rmt,-4 \rmb^2 \rmt, \rmc_5,\dots, \rmc_{2N-1})^t + \bm{C} \big)$
satisfies the KdV equation \eqref{KdVEq}.
\end{theo}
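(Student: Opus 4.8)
The plan is to obtain $\upsilon$ as the Miura image \eqref{MiuraTrans} of the defocusing mKdV solution \eqref{MKdVSol} and then to recast that image into the closed form \eqref{NewKdVSol}. The backbone is the Miura factorization
\[
\upsilon_\rmt + \upsilon_{\rmx\rmx\rmx} - 6\upsilon\upsilon_\rmx = (2w + \partial_\rmx)\big(w_\rmt + w_{\rmx\rmx\rmx} - 6w^2 w_\rmx\big),\qquad \upsilon = w_\rmx + w^2,
\]
which holds for arbitrary $w$. I would take $w = \alpha$ from \eqref{MKdVSol}; since $\alpha(\rmx,\rmt)$ satisfies $\alpha_\rmt + \alpha_{\rmx\rmx\rmx} - 6\alpha^2\alpha_\rmx = 0$, that is \eqref{MKdVEq} with $\varsigma=-1$ (this is what the extended equation \eqref{MKdVEqPre} becomes under \eqref{xtNatural}; the label $\varsigma=1$ attached to the $\mathfrak{sl}(2,\Real)$ case in Subsection~\ref{ss:SineSinhParam} is a misprint), the right-hand side vanishes. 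Hence $\upsilon = \alpha_\rmx + \alpha^2$ already solves \eqref{KdVEq}, and the whole task reduces to identifying this $\upsilon$ with $\gamma(u)$.

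Next I would compute the Miura image explicitly. Writing $P = \wp_{1,2N-1}(u)$ and using $\alpha = -\tfrac12\partial_\rmx\log P$ from the remark following Theorem~\ref{T:MKdVESol}, one has $\upsilon = \tfrac14(\partial_\rmx\log P)^2 - \tfrac12\partial_\rmx^2\log P$. The component $u_1$ is the only one with $\partial u_1/\partial\rmx = \rmb \neq 0$, so $\partial_\rmx = \rmb\,\partial_{u_1}$ on functions of $u$; together with $\partial_{u_1}\wp_{1,2N-1} = \wp_{1,1,2N-1}$ and $\partial_{u_1}\wp_{1,1,2N-1} = \wp_{1,1,1,2N-1}$ this gives
\[
\upsilon = \rmb^2\bigg(\frac{3\,\wp_{1,1,2N-1}^2}{4\,\wp_{1,2N-1}^2} - \frac{\wp_{1,1,1,2N-1}}{2\,\wp_{1,2N-1}}\bigg).
\]

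The crux is to collapse this combination to \eqref{NewKdVSol} using two identities already available in the paper, rather than the generic cubic relations \eqref{WP3IndSqRel}. First, equating the two expressions for $\beta_{2N-3}$, namely the first line of \eqref{WPBeta} and the case $i=1$ of \eqref{WPBetaM}, gives
\[
\frac{\wp_{3,2N-1}}{\wp_{1,2N-1}} = \wp_{1,1} + \lambda_2 - \frac{\wp_{1,1,2N-1}^2}{4\,\wp_{1,2N-1}^2}.
\]
Second, the top stationary-flow equation \eqref{AlphaStEq} with $m=N-1$, i.e.\ $\partial_\sfx\alpha_{2N-2} = \rmb(\gamma_{2N-3} - \beta_{2N-3})$, after substituting \eqref{WPAlphaM}, \eqref{WPGamma}, \eqref{WPBeta}, is precisely \eqref{WP1112n1} with $\rho_{2N+2}=0$, that is
\[
\wp_{1,1,1,2N-1} = 4\,\wp_{1,1}\,\wp_{1,2N-1} + \frac{\wp_{1,1,2N-1}^2}{2\,\wp_{1,2N-1}} + 2\lambda_2\,\wp_{1,2N-1}.
\]
Substituting this into the explicit $\upsilon$ reduces it to $\rmb^2\big(\tfrac12\wp_{1,1,2N-1}^2/\wp_{1,2N-1}^2 - 2\wp_{1,1} - \lambda_2\big)$, and the first identity then replaces the $\wp_{1,1}$-term, leaving exactly $\gamma(u) = \rmb^2(\lambda_2 - 2\wp_{3,2N-1}/\wp_{1,2N-1})$.

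The main obstacle is this last algebraic identity. It is not conceptually deep, but attacking it directly through \eqref{WP3IndSqRel} with $i=j=N$ brings in the out-of-range functions $\wp_{1,2N+1}$, $\wp_{2N+1,2N+1}$, $\wp_{2N-1,2N+3}$, which must then be removed with the help of $\lambda_{4g+2}=0$ and the curve reductions; channelling the computation instead through the double expression for $\beta_{2N-3}$ and through the single top equation of the stationary flow is what keeps it short. The remaining points demanding care are purely the bookkeeping factor $\rmb$ in $\partial_\rmx = \rmb\,\partial_{u_1}$ and the sign convention pinning $\alpha$ as a $\varsigma=-1$ solution, so that the Miura map lands on \eqref{KdVEq} rather than its focusing counterpart.
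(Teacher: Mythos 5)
Your proposal is correct and follows essentially the same route as the paper: the Miura image of \eqref{MKdVSol} is collapsed to \eqref{NewKdVSol} using \eqref{WP1112n1} (with vanishing right-hand constant) together with the quadratic identity $\wp_{1,1,2g-1}^2 = 4\,\wp_{1,2g-1}^2\big(\lambda_2+\wp_{1,1}\big) - 4\,\wp_{1,2g-1}\,\wp_{3,2g-1}$ at $\lambda_{4g+2}=0$, which you recover equivalently by equating the two expressions for $\beta_{2N-3}$ in \eqref{WPBeta} and \eqref{WPBetaM}. Your write-up merely supplies details the paper leaves implicit --- the Miura factorization, the explicit $\wp$-form of $\upsilon$, and the correct observation that the $\varsigma$-labels in Subsection~\ref{ss:SineSinhParam} are swapped relative to the introduction's convention.
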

\begin{proof}

Using the identities \eqref{WP1112n1} and
\begin{equation}
\wp_{1,1,2g-1}^2 = 4 \lambda_{4g+2} + 4 \wp_{1,2g-1}^2 \big(\lambda_2
+ \wp_{1,1} \big) - 4  \wp_{1,2g-1}  \wp_{3,2g-1}
\end{equation}
with $\lambda_{4g+2}=0$,
we find that \eqref{NewKdVSol} solves the KdV equation.

$\gamma$ satisfyies
\begin{equation}
-4 \partial_{u_3} \gamma + \partial_{u_1}^3 \gamma + (2 \lambda_2 - 6 \gamma) \partial_{u_1} \gamma = 0.
\end{equation}
\end{proof}

%======================================
\section{Non-linear waves}\label{s:NLW}
In this section we present effective  computation of quasi-periodic 
finite-gap solutions of the sine-Gordon equation.

%------------------------------------------------------------------
\subsection{Numerical computation}
We will use the explicit formula \eqref{SinGSol}.
All computations are made in Wolfram Mathematica 12.
Integrals between branch points are computed with the help of \texttt{NIntegrate}
with \texttt{WorkingPrecision} equal to, or greater than $30$.
The canonical curve  \eqref{V22g1Eq} is defined through its parameters $\lambda_k$,
and roots of the polynomial $\Lambda$ are computed by means of \texttt{NSolve},
with the same \texttt{WorkingPrecision}.

$\wp$-Functions are computed by the formulas
\begin{gather}\label{WPdefComp}
\begin{split}
&\wp_{i,j}(u)  = \varkappa_{i,j} - \frac{\partial^2}{\partial u_i \partial u_j } \theta[K](\omega^{-1} u; \omega^{-1} \omega'),\\
& \wp_{i,j,k}(u)  =- \frac{\partial^3}{\partial u_i \partial u_j \partial u_k} \theta[K](\omega^{-1} u; \omega^{-1} \omega'),
\end{split}
\end{gather}
where $\varkappa_{i,j}$ are entries of the symmetric matrix $\varkappa = \eta \omega^{-1}$, and
 the period matrices $\omega$,  $\omega'$,  $\eta$ are obtained from 
the differentials \eqref{K1DifsGen}, \eqref{K2DifsGen} along the canonical cycles as on
fig.~\ref{cyclesOdd}. Actually, columns of the matrices $\omega$,  $\omega'$,  $\eta$
are computed as follows
\begin{gather*}%\label{K12PerComp}
\begin{split}
  &\omega_k = 2\int_{e_{2k-1}}^{e_{2k}} \rmd u,\qquad\quad \eta_k = 2\int_{e_{2k-1}}^{e_{2k}} \rmd r,\\
  &\omega'_k = - 2 \sum_{i=1}^k \int_{e_{2i-2}}^{e_{2i-1}} \rmd u  
  = 2 \sum_{i=k}^g  \int_{e_{2i}}^{e_{2i+1}} \rmd u.
  \end{split}
\end{gather*}

%=============================

%=================================
\appendix
\section{Reality conditions via the addition law}\label{A:RCAddLaw}

\textbf{Genus 1}. 
Let $\mathcal{V}$ have  branch points at $e_1\,{\in}\,\Real$, $e_2$, $\bar{e}_2$,
and so
$$ \lambda_2 = -e_1-e_2-\bar{e}_2,\qquad 
\lambda_4 = e_1e_2 + e_1 \bar{e}_2 + e_2 \bar{e}_2,\qquad
\lambda_6 = -e_1 e_2 \bar{e}_2. $$

We start with the case of $\mathfrak{g} \,{=}\, \imath \mathfrak{su}(2)$,
In genus $1$ the reality conditions \eqref{iSU2GammaAlpha} acquire the form
\begin{subequations}
\begin{gather}
  \wp(s \,{+}\, \bm{K}) \wp(s \,{+}\, \bar{\bm{K}}) = \lambda_{4}, \label{iSU2GammaG1}\\
\ReN \frac{ \wp'(s \,{+}\, \bm{K})}{\wp(s \,{+}\, \bm{K})} = 0. \label{iSU2AlphaG1}
\end{gather}
\end{subequations}

Applying the addition law to $u \,{=}\, s \,{+}\, \bm{K}$,
and taking into account $\wp(\bm{K}) \,{=}\, e_2$, and $\wp'(\bm{K}) \,{=}\, 0$, we have
\begin{gather}
\wp(s \,{+}\, \bm{K})= \frac{\nu_3^2 - \lambda_6}{e_2 \wp(s)},\qquad
\wp'(s \,{+}\, \bm{K}) = 2 \nu_3 + 2\nu_1 \wp(s \,{+}\, \bm{K}),\\
\nu_1 = \frac{\wp'(s)}{2(\wp(s) - e_2)},\qquad 
\nu_3 = -\frac{e_2 \wp'(s)}{2(\wp(s) - e_2)}.
\end{gather}
Then we find
\begin{equation}
  \wp(s \,{+}\, \bm{K}) \wp(s \,{+}\, \bar{\bm{K}}) 
  =  \frac{ (e_2 \wp(s) - \lambda_4)}{ (\wp(s) - e_2)} 
  \frac{(\bar{e}_2 \wp(s) - \lambda_4)}{(\wp(s) - \bar{e}_2)},
\end{equation}
where the Weierstrass equation is applied
$$\wp'(u)^2 = 4 \big(\wp(u) - e_1\big)\big(\wp(u) - e_2\big)\big(\wp(u) - \bar{e}_2\big),\quad
\forall u\in\Jac(\mathcal{V}) \backslash\{0\}.$$
This expression simplifies if and only if $e_1=0$, which implies  $\lambda_4 = e_2 \bar{e}_2$.
Then, we come to \eqref{iSU2GammaG1}, or more precisely
\begin{gather}\label{iSU2GammaG1M}
  \wp(s \,{+}\, \bm{K}) \wp(s \,{+}\, \bar{\bm{K}}) = e_2 \bar{e}_2. 
\end{gather}

Next, assuming $e_1=0$, and taking into account \eqref{iSU2GammaG1M}, we have
\begin{equation}
\frac{ \wp'(s \,{+}\, \bm{K})}{\wp(s \,{+}\, \bm{K})} 
= \frac{2\nu_3 }{e_2 \bar{e}_2} \wp(s \,{+}\, \bar{\bm{K}}) + 2\nu_1 
= \frac{\wp'(s) (e_2 - \bar{e}_2)}{|\wp(s) - e_2|^2}.
\end{equation}
Evidently, \eqref{iSU2AlphaG1} holds.

The case of $\mathfrak{g} \,{=}\, \mathfrak{su}(2)$ is similar to the above.
We assume that
$\mathcal{V}$ has  branch points at $e_1=0$, $e_2$, $\bar{e}_2$.
The reality consitions \eqref{SU2GammaAlpha} acquire the form
\begin{subequations}\label{SU2GammaAlphaG1}
\begin{gather}
\lambda_{4}  =  \wp(\imath s \,{+}\, \bm{K}) \wp(\imath s \,{+}\, \bar{\bm{K}}),\label{SU2GammaG1} \\
\ImN \frac{ \wp'(\imath s \,{+}\, \bm{K})}{\wp(\imath s \,{+}\, \bm{K})} = 0. \label{SU2AlphaG1}
\end{gather}
\end{subequations}
Applying the addition law to $u \,{=}\, \imath s \,{+}\, \bm{K}$,
 we have
\begin{gather}
\wp(\imath s \,{+}\, \bm{K})= \frac{\nu_3^2}{e_2 \wp(\imath s)},\qquad
\wp'(\imath s \,{+}\, \bm{K}) = 2 \nu_3 + 2\nu_1 \wp(\imath s \,{+}\, \bm{K}),\\
\nu_1 = \frac{\wp'(\imath s)}{2\big(\wp(\imath s) - e_2\big)},\quad 
\nu_3 = -\frac{e_2 \wp'(\imath s)}{2\big(\wp(\imath s) - e_2\big)}
\end{gather}
The same computations with $\imath s$, instead of $s$, prove \eqref{SU2GammaAlphaG1}.

\textbf{Genus 2}. 
We start with an assumption that branch points of $\mathcal{V}$ are $e_1=0$, real $e_2$, $e_3$, 
and complex conjugate $e_4$, $\bar{e}_4$. Then, we show that the reality conditions 
\eqref{SU2GammaAlpha}, and \eqref{iSU2GammaAlpha} are satisfied if and only if $e_3=\bar{e}_2$.

With such an assumption,  $\lambda_{10} = 0$, and  $\lambda_8 = e_2 e_3 e_4 \bar{e}_4$.
Also we take into account that $\wp_{1,1}(\bm{K}) \,{=}\, e_2 + e_4$, $\wp_{1,3}(\bm{K}) \,{=}\, {-} e_2 e_4$, 
and $\wp_{i,j,k}(\bm{K}) \,{=}\, 0$.

The reality conditions \eqref{iSU2GammaAlpha} acquire the form
\begin{subequations}
\begin{gather}
\wp_{1,3}(s \,{+}\, \bm{K}) \wp_{1,3}(s \,{+}\, \bar{\bm{K}}) = \lambda_{8}, \label{iSU2GammaN1G2}\\
\wp_{3,3}(s \,{+}\, \bm{K}) = \wp_{1,3}(s \,{+}\, \bm{K}) \wp_{1,1}(s + \bar{\bm{K}}),\label{iSU2GammaG2} \\
\ReN \frac{ \wp_{1,1,3}(s \,{+}\, \bm{K})}{\wp_{1,3}(s \,{+}\, \bm{K})} = 0,\qquad
\ReN \frac{ \wp_{1,3,3}(s \,{+}\, \bm{K})}{\wp_{1,3}(s \,{+}\, \bm{K})} = 0. \label{iSU2AlphaG2}
\end{gather}
\end{subequations}

Applying the addition law to $u \,{=}\, s \,{+}\, \bm{K}$ we have
\begin{subequations}
\begin{align}
&\wp_{1,3}(s \,{+}\, \bm{K}) = \frac{\nu_6^2}{e_2 e_4 \wp_{1,3}(s)}, \label{WP13AddLaw}\\
&\wp_{1,1}(s \,{+}\, \bm{K})= \frac{1}{e_2 e_4 \wp_{1,3}(s)}\bigg(\lambda_8 \nu_1^2 - 2 \nu_4 \nu_6
+ \nu_6^2 \bigg(\frac{\wp_{1,1}(s)}{\wp_{1,3}(s)} - \frac{e_2 + e_4}{e_2 e_4} \bigg)\bigg),\\
&\wp_{1,1,3}(s \,{+}\, \bm{K}) = 2 \nu_1^{-1} \big(\nu_6 + \nu_2 \wp_{1,3}(s \,{+}\, \bm{K})
+ \wp_{1,1}(s \,{+}\, \bm{K}) \wp_{1,3}(s \,{+}\, \bm{K}) \big),\\
&\wp_{1,1,1}(s \,{+}\, \bm{K}) = 2 \nu_1^{-1} \big(\nu_4 + \nu_2 \wp_{1,1}(s \,{+}\, \bm{K})
+ \wp_{1,1}(s \,{+}\, \bm{K})^2 + \wp_{1,3}(s \,{+}\, \bm{K}) \big),
\intertext{where}
&\nu_6 = e_2 e_4 \Big({-} \wp_{1,3}(s) (\wp_{1,1}(s) - e_2- e_4 )\wp_{1,1,1}(s) \\
&\qquad + \big(\wp_{1,3}(s)+e_2 e_4 + \wp_{1,1}(s) (\wp_{1,1}(s) - e_2- e_4 ) \big) 
\wp_{1,1,3}(s) \Big) \Pi^{-1}, \notag\\ 
&\nu_4 = \Big( \big((e_2+e_4)\wp_{1,3}(s)  \wp_{1,1}(s) - (e_2^2 + e_2 e_4 + e_4^2)\wp_{1,3}(s)
 + e_2^2 e_4^2 \big)\wp_{1,1,1}(s) \\
&\!\!\!\quad + \big((e_2^2 + e_2 e_4 + e_4^2)\wp_{1,1}(s) - (e_2+e_4)(\wp_{1,1}(s)^2 + \wp_{1,3}(s)) \big) 
\wp_{1,1,3}(s) \Big) \Pi^{-1}, \notag \\ 
&\nu_2 = \Big({-} \big(\wp_{1,3}(s) \wp_{1,1}(s)  + e_2 e_4 (e_2+e_4) \big) \wp_{1,1,1}(s)  \\
&\qquad + \big(\wp_{1,3}(s) + \wp_{1,1}(s)^2 - (e_2^2 + e_2 e_4 + e_4^2) \big) \wp_{1,1,3}(s) \Big)\Pi^{-1}, \notag\\ 
&\nu_1 = 2 \big(\wp_{1,3}(s) + e_2 \wp_{1,1}(s) - e_2^2\big)  \big(\wp_{1,3}(s) + e_4 \wp_{1,1}(s) - e_4^2\big) \Pi^{-1},\notag\\
&\Pi = \big(\wp_{1,3}(s)+e_2 e_4\big) \wp_{1,1,1}(s) - (\wp_{1,1}(s) - e_2 - e_4)\wp_{1,1,3}(s). \notag
\end{align}
\end{subequations}
With the help of \eqref{WP13AddLaw} we find
\begin{equation}
\wp_{1,3}(s \,{+}\, \bm{K}) \wp_{1,3}(s \,{+}\, \bar{\bm{K}}) = 
\frac{\nu_6^2 \bar{\nu}_6^2}{e_2 \bar{e}_2 e_4 \bar{e}_4 \wp_{1,3}(s)^2},
\end{equation}
and
\begin{multline}\label{Nu6Nu6Bar}
\frac{\nu_6 \bar{\nu}_6}{e_2 \bar{e}_2 e_4 \bar{e}_4} 
= - \wp_{1,3}(s) + \Big( \big[2 \wp_{1,3}(s) \Phi_8 - \wp_{1,1}(s) \Phi_{10}\big]  \wp_{1,1,1}(s)^2 \\
- \big[ 2 \wp_{1,3}(s) \Phi_6 + 2 \Phi_{10} \big] \wp_{1,1,1}(s)\wp_{1,1,3}(s)
+ \big[\wp_{1,1}(s) \Phi_6 + \Phi_8\big]\Big) (\Pi \bar{\Pi})^{-1}  
\end{multline}
where
\begin{multline*}
\Pi \bar{\Pi} =
  \Big(\big[\wp_{1,3}(s)^2 + (e_2 e_4 + \bar{e}_2 \bar{e}_4) \wp_{1,3}(s) + \lambda_8 \big] \wp_{1,1,1}(s)^2 \\
 - \big[ \wp_{1,3}(s) (2 \wp_{1,1}(s) + \lambda_2) +  (e_2 e_4 + \bar{e}_2 \bar{e}_4) \wp_{1,1}(s)
 + \lambda_6 \big] \wp_{1,1,1}(s)\wp_{1,1,3}(s) \\
 + \big[\wp_{1,1}(s)^2 + \lambda_2 \wp_{1,1}(s) + (e_2 + e_4)(\bar{e}_2 + \bar{e}_4) \big] \wp_{1,1,3}(s)^2 \Big)^{-1},
\end{multline*}
and the fundamental identities in genus $2$ are employed
\begin{gather}
\begin{split}
&0 = \Phi_6 \equiv \wp_{1,1}(u)^3  + \lambda_2 \wp_{1,1}(u)^2  + \wp_{1,1}(u)\wp_{1,3}(u) 
+ \lambda_4  \wp_{1,1}(u) + \lambda_6 \\
&\phantom{mmmmmmmmmmmmmmmmmmmm} +  \wp_{3,3}(u) - \tfrac{1}{4} \wp_{1,1,1}(u)^2,\\
&0 = \Phi_8 \equiv \wp_{1,1}(u) \wp_{1,3}(u) \big(\wp_{1,1}(u) + \lambda_2\big)
-  \tfrac{1}{4} \wp_{1,1,1}(u) \wp_{1,1,3}(u) \\
&\qquad\qquad\quad\ 
+ \tfrac{1}{2} \big(\wp_{1,3}(u)^2 - \wp_{1,1}(u) \wp_{3,3}(u) + \lambda_4 \wp_{1,3}(u) + \lambda_8 \big), \\
&0 = \Phi_{10}  \equiv \wp_{1,3}(u)^2 \big(\wp_{1,1}(u) + \lambda_2\big) - \wp_{1,3}(u) \wp_{3,3}(u)
- \tfrac{1}{4} \wp_{1,1,3}(u)^2.
\end{split}
\end{gather}
Thus, the second term on the right hand side of \eqref{Nu6Nu6Bar} vanishes,
and we obtain
\begin{equation}
\frac{\nu_6 \bar{\nu}_6}{e_2 \bar{e}_2 e_4 \bar{e}_4}  = - \wp_{1,3}(s),
\end{equation}
which implies \eqref{iSU2GammaN1G2}.
Note, such a simplification takes place if and only if four branch points form complex conjugate pairs, and the
remaining finite branch point is zero.

Next, we consider \eqref{iSU2GammaG2}, where 
\begin{multline}
 \wp_{1,3}(s \,{+}\, \bm{K}) \wp_{1,1}(s + \bar{\bm{K}}) \\
=  \frac{\nu_6^2}{e_2 e_4 \bar{e}_2 \bar{e}_4\wp_{1,3}(s)^2}
\bigg(\lambda_8 \bar{\nu}_1^2 - 2 \bar{\nu}_4 \bar{\nu}_6
+ \bar{\nu}_6^2 \bigg(\frac{\wp_{1,1}(s)}{\wp_{1,3}(s)} - \frac{\bar{e}_2 + \bar{e}_4}{\bar{e}_2 \bar{e}_4} \bigg)\bigg) \\
=  \frac{\nu_6^2 \big(\lambda_8 \bar{\nu}_1^2 - 2 \bar{\nu}_4 \bar{\nu}_6 \big)}
{e_2 e_4 \bar{e}_2 \bar{e}_4\wp_{1,3}(s)^2} +
\lambda_8
\bigg(\frac{\wp_{1,1}(s)}{\wp_{1,3}(s)} - \frac{\bar{e}_2 + \bar{e}_4}{\bar{e}_2 \bar{e}_4} \bigg)
 \end{multline}
 Then, $\wp_{3,3}(u)$ is obtained from the identity $\Phi_6 = 0$
\begin{multline}
\wp_{3,3}(s \,{+}\, \bm{K}) = \frac{1}{\nu_1^2} \Big(\nu_4 + \nu_2 \wp_{1,1}(s \,{+}\, \bm{K})
+ \wp_{1,1}(s \,{+}\, \bm{K})^2 + \wp_{1,3}(s \,{+}\, \bm{K}) \Big)^2 \\
 - \big(\wp_{1,1}(s \,{+}\, \bm{K})^3  + \lambda_2 \wp_{1,1}(s \,{+}\, \bm{K})^2  
 + \wp_{1,1}(s \,{+}\, \bm{K})\wp_{1,3}(s \,{+}\, \bm{K}) \\
+ \lambda_4  \wp_{1,1}(s \,{+}\, \bm{K}) + \lambda_6 \big).
\end{multline} 

The first expression in \eqref{iSU2AlphaG2} has the form
\begin{multline}
 \frac{ \wp_{1,1,3}(s \,{+}\, \bm{K})}{\wp_{1,3}(s \,{+}\, \bm{K})}
  = \frac{2 \nu_6}{\nu_1 \wp_{1,3}(s \,{+}\, \bm{K})} 
+ \frac{2}{\nu_1} \big(\nu_2 + \wp_{1,1}(s \,{+}\, \bm{K}) \big) \\
= \frac{2 \nu_6}{\nu_1 \lambda_8} \wp_{1,3}(s \,{+}\, \bar{\bm{K}}) 
 + \frac{2 \nu_2}{\nu_1} +  \frac{2}{\nu_1} \wp_{1,1}(s \,{+}\, \bm{K}) \\
 =  \frac{2 \nu_6}{\nu_1 \lambda_8} \frac{\bar{\nu}_6^2}{\bar{e}_2 \bar{e}_4 \wp_{1,3}(s)}
  + \frac{2 \nu_2}{\nu_1} +  \frac{2}{\nu_1} 
  \frac{1}{e_2 e_4 \wp_{1,3}(s)}\bigg(\lambda_8 \nu_1^2 - 2 \nu_4 \nu_6
+ \nu_6^2 \bigg(\frac{\wp_{1,1}(s)}{\wp_{1,3}(s)} - \frac{e_2 + e_4}{e_2 e_4} \bigg)\bigg) \\
= \frac{2 e_2 e_4 \wp_{1,3}(s)}{\nu_1 \nu_6} + \frac{2 \nu_2}{\nu_1} 
+  \frac{2 \lambda_8 \nu_1}{e_2 e_4 \wp_{1,3}(s)}
- \frac{4 \nu_4 \nu_6}{e_2 e_4 \nu_1  \wp_{1,3}(s)}
+ \frac{2 \nu_6^2}{e_2 e_4 \nu_1 \wp_{1,3}(s)} \bigg(\frac{\wp_{1,1}(s)}{\wp_{1,3}(s)} - \frac{e_2 + e_4}{e_2 e_4} \bigg).
 \end{multline}


\begin{thebibliography}{99}

\bibitem{AKNS1973}  Ablowitz M. J., Kaup D. J., Newell A. C., Segur H.,  
On symplectic structures and integrable systems
on symmetric spaces, Phys, Rev. Lett. \textbf{30}:25 (1973) 1262--1264.

\bibitem{Adler79} Adler, M. On a trace functional for formal pseudo-differential 
operators and the symplectic structure of the Korteweg-devries type equations.
Invent. Math. \textbf{50}, 219--248 (1978).

\bibitem{AdlMoer80} Adler, M., Moerbeke P.
Completely integrable systems, Euclidean Lie algebras, and curves.
Advances in Math., \textbf{38}:3, 267--317 (1980) 

\bibitem{bakerAF} Baker H.F., Abelian functions: Abel's theorem and the allied theory of theta functions, 
Cambridge, University press, Cambridge, 1897.


\bibitem{bbeim1994}  Belokolos E. D., Bobenko A. I.,   Enolski V. Z., Its A. R., Matveev. V.B.,
Algebro-geometric approach to nonlinear integrable equations., Springer-Verhag, 1994

\bibitem{BerKdV2024}  Bernatska J., Reality conditions for the KdV equation and exact 
quasi-periodic solutions in finite phase spaces,  J. Geom. Phys. \textbf{206} (2024) 105322.

\bibitem{BerSinG2024}  Bernatska J., Reality conditions for the sine-Gordon equation and exact 
quasi-periodic solutions in finite phase spaces,  Physica D. \textbf{206} (2025).

\bibitem{BerCompWP2024} Bernatska J.,
Computation of $\wp$-functions on plane algebraic curves, arXiv:2407.05632 .

\bibitem{BerWPFF2025} Bernatska J.,
Abelian function fields on Jacobian varieties, Axioms \textbf{14}:2 (2025) 90.

\bibitem{BerHol07}  Bernatska J, Holod P. On Separation of variables for Integrable equations of soliton type, 
J. Nonlin. Math. Phys. \textbf{14}:3 (2007)  353--374

\bibitem{belHKF} Buchstaber V. M.,  Enolskii V. Z., and  Leykin D. V.,
Hyperelliptic Kleinian functions and applications, preprint ESI 380 (1996), Vienna

\bibitem{bl2005}  Buchstaber V. M.,  Leikin D. V., 
Addition laws on Jacobian varieties of plane algebraic curves, Nonlinear
dynamics, Collected papers, Tr. Mat. Inst. Steklova, \textbf{251} (2005), pp. 54--126.

\bibitem{ChenPen2018}
Chen J., Pelinovsky E., Rogue periodic waves of the modified KdV equation,
Nonlinearity \textbf{31} (2018) pp.\,1955--1980

\bibitem{CAA2016}
Chowdurya A., Ankiewicz A., and  Akhmediev N.,
Periodic and rational solutions of modified Korteweg-de Vries equation,
Eur. Phys. J. D  \textbf{70} (2016), 104

\bibitem{DecNiv2010} Deconinck B., Nivala M., The stability analysis 
of the periodic traveling wave solutions of the mKdV equation,
Studies in Applied Mathematics, \textbf{126} (2010) pp.\,17--48


\bibitem{Fu2004El}
Fu Z., Zhang L., Liu S., Liu S.,
Fractional transformation and new solutions to mKdV equation,
Phys. Lett. A, \textbf{325} (2004) pp.\,363--369

\bibitem{Fu2004Trig}
Fu Z., Liu S., Liu S.,
New solutions to mKdV equation
Phys. Lett. A, \textbf{326} (2004) pp.\,364--374

\bibitem{Hirota1972}  Hirota R., 
Exact solution of the modified Korteweg---de Vries equation for multiple collisions of solitons, 
J. Phys. Soc. Japan, \textbf{33}:5 (1972) 1456--1458.

\bibitem{HolMKdV1982} Holod P.I., Integrable Hamiltonian systems on the orbits of the affine Lie groups and a periodic
problem for the modified Korteweg-de Vries equation, Preprint
ITP-82-144, 1982 (in Russian).

\bibitem{Holod84} 
Holod P., Hamiltonian systems on the orbits of affine Lie groups and finite-band integration of nonlinear equation,
pp.\,1361--1367;
In Nonlinear and Turbulent Processes in Physics: Nonlinear effects in plasma physics, 
astrophysics, and elementary particle theory. Ed. Sagdeev R. Z., Harwood Academic Publishers, 1984.

%\bibitem{Holod85}  Holod P. I. Hamiltonian systems on the orbits of the affine Lie groups 
%and nonlinear integrable equations, Physics of Many-Particle Systems, 7 (1985), 30–39.

\bibitem{GesSvi1995}  Gesztesy F.,  Svirsky R.,
  (m)KdV Solitons on the background of quasi-periodic finite-gap solutions.
  Mem. Amer. Math. Soc. \textbf{118}  No\,563  (1995) 
  
\bibitem{KapDec2015}
Kapitula, T., Deconinck, B. (2015). On the Spectral and Orbital Stability of Spatially Periodic Stationary Solutions of Generalized Korteweg-de Vries Equations. In: Guyenne, P., Nicholls, D., Sulem, C. (eds) Hamiltonian Partial Differential Equations and Applications. Fields Institute Communications, vol 75. Springer, New York, 

\bibitem{KKS2003}
Kevrekidis P. G.,  Khare A.,  Saxena A.,
Breather lattice and its stabilization for the modified Korteweg---de Vries equation,
Phys. Rev. E \textbf{68} (2003), 047701

\bibitem{KKH2004}
 Kevrekidis P. G.,  Khare A., Saxena A., Herring G.
On some classes of mKdV periodic solutions

\bibitem{Mat2002}
Matsutani S., Hyperelliptic solutions of modified Korteweg---de Vries
equation of genus $g$: essentials of the Miura transformation, 
J. Phys. A.: Math. Gen \textbf{35} (2002) pp.\.4321--4333.

\bibitem{Mat2022}
Matsutani S.,  Previato E.,
An algebro-geometric model for the shape of supercoiled DNA,
Physica D: Nonlinear Phenomena, \textbf{430} (2022), 133073.

\bibitem{Mat2024}
Matsutani S.,
On Real Hyperelliptic Solutions of Focusing Modified KdV Equation,
 Mathematical Physics, Analysis and Geometry, \textbf{27} (2024), 19


\bibitem{Miu1968} Miura R. M., Korteweg---de Vries equation and generalizations. I.
A remarkable explicit nonlinear transformation, J. Math. Phys. 
\textbf{9} (1968), pp.\,1202--1204.

\bibitem{MucPel2024}
Mucalica A., Pelinovsky D. E.
Dark breathers on a snoidal wave background in the defocusing mKdV equation,
Lett. Math. Phys., \textbf{114} (2024), 100


\bibitem{Ono1976} Ono, H., Algebraic soliton of the modified Korteweg---de Vries equation,
J. Phys. Soc. Japan, \textbf{41}:5 (1976) 1817--1818.

\bibitem{Wad1973}  Wadati, M., The modified Korteweg---de Vries equation,
J. Phys. Soc. Japan, \textbf{34}:5 (1973) 1289--1296.

\bibitem{ZFML2009}
Zhao X., Fu Z.-T., Mao J.-Y.,  Liu S.-K.,
Breather Lattice Solutions to Negative mKdV Equation, 
Commun. Theor. Phys. \textbf{52} (2009), pp.\;23--29


\end{thebibliography}
\end{document}